\newcommand{\F}{\mathbb{F}}
\renewcommand{\E}{\mathbb{E}}
\newcommand{\COMMENTZ}[1]{}
\newcommand{\Alice}{\textrm{Alice}}
\newcommand{\Bob}{\textrm{Bob}}
\newcommand{\FQ}{\F_Q}
\newcommand{\CHSH}{\mathbf{ \mathsf{CHSH}}}
\newcommand{\CHSHQ}{\mathbf{ \mathsf{CHSH_Q}}}
\newcommand{\OO}{\mathcal{O}}
\newcommand{\OOmega}{\mathrm{\Omega}}
\renewcommand{\mod}{\textrm{ mod }}
\newtheorem{CClaim}{Claim}
\newcommand{\Cq}{\omega(\CHSH_Q)}
\newcommand{\encad}[1]{%
	\fbox{\begin{minipage}{0.98\linewidth}%
			#1\end{minipage}}}
\newcommand{\be}{\begin{equation}}
\newcommand{\ee}{\end{equation}}
\def\multiset#1#2{\ensuremath{\left(\kern-.3em\left(\genfrac{}{}{0pt}{}{#1}{#2}\right)\kern-.3em\right)}}
\author{Rémi Bricout and André Chailloux \\ Inria, Paris.}
\title{Recursive cheating strategies for the relativistic $\F_Q$ bit commitment protocol}
\date{}
\begin{document}\thispagestyle{plain}\setcounter{page}{1}
	\maketitle
	\thispagestyle{plain}
\begin{abstract}
	In this paper, we study relativistic bit commitment, which uses timing and location constraints to achieve information theoretic security. We consider the $\F_Q$ multi-round bit commitment scheme introduced by Lunghi \etal \cite{LKB+15}. This protocol was shown secure against classical adversaries as long as the number of rounds $m$ is small compared to $\sqrt{Q}$ where $Q$ is the size of the used field in the protocol \cite{CCL15,FF16}.
	
	In this work, we study classical attacks on this scheme. We use classical strategies for the $\CHSH_Q$ game described in \cite{BS15} to derive cheating strategies for this protocol. In particular, our cheating strategy shows that if $Q$ is an even power of any prime, then the protocol is not secure when the number of rounds $m$ is of the order of $\sqrt{Q}$. For those values of $Q$, this means that the upper bound of \cite{CCL15,FF16} is essentially optimal. 
\end{abstract}


\section{Introduction}
\subsection{Context}
\thispagestyle{plain}
The goal of relativistic cryptography is to exploit the no superluminal signaling (NSS) principle in order to perform various cryptographic tasks. NSS states that no information carrier can travel at a speed greater than the speed of light. Note that NSS is closely related to the non-signaling principle that says that a local action performed in a laboratory cannot have an \emph{immediate} influence outside of the lab. NSS is more precise since it gives an upper bound on the speed at which such an influence can propagate. 
Apart from this physical principle, we want to ensure {information-theoretic} security meaning that the schemes proposed cannot be attacked by any classical (or quantum) computers, even with infinite computing power. This is in  contrast with used schemes, which most often rely on computational assumptions such as the hardness of factoring \cite{RSA}. 

The idea of using physical assumptions laws to ensure information theoretic security for cryptographic schemes is not a new one. The most striking example in recent years is Quantum Key Distribution (QKD) which allows two distant parties to distill a secret key with information-theoretic security \cite{BB84}. The main idea of QKD is to exchange quantum states on an insecure quantum channel and check a posteriori whether they have been disturbed. If not, it means that no eavesdropper was tampering with the quantum channel and the quantum states can be safely used to distill a secret. In fact, this works provided that the quantum states are not too noisy. QKD is quite practical and has indeed been widely deployed, but at the same time, it requires dedicated hardware and can only work today provided the 2 parties are not too far away from each other, at most a few hundred kilometers (see for instance \cite{KLH15} for the current record).

The idea of using the NSS principle for cryptographic protocols originated in a pioneering work by Kent in 1999 \cite{Kent99} as a way to physically enforce a non communication constraint between the different agents of one party (the idea of splitting up a party into several agents dates back to \cite{BGK88}, but without an explicit implementation proposal). The original goal of Kent was to bypass the no-go theorems for quantum bit-commitment \cite{May97,LC97}. Interestingly, this original protocol was classical and allowed for several rounds which increased the lifespan of the protocol. However, the protocol required to exchange messages whose length scaled exponentially in the number of rounds (i.e. the commitment time) and a feasible implementation was not possible for a large number of rounds.  A subsequent work \cite{Kent05} improved this scaling, but to our knowledge, no precise time/security tradeoff is available for this protocol.

More recently, quantum relativistic bit commitment protocols were developed where the parties exchange quantum systems, with the hope that combining the no superluminal signaling principle with quantum theory will lead to more secure (but less practical) protocols \cite{Kent11,Kent12, KTH13}. In particular, the protocol \cite{Kent12} was implemented in Ref.~\cite{LKB13}.  We note that the scope of relativistic cryptography is not limited to bit commitment. For instance, there was recently some interest (sparked again by Kent) for position-verification protocols \cite{KMS10, LL11,Unr14} but contrary to the case of bit commitment, it was shown that secure position-verification is impossible both in the classical and the quantum settings \cite{CGM09,BCF14}.

The original idea of \cite{BGK88} was recently revisited by Cr\'epeau \etal \cite{CSST11} (see also \cite{sim07}). Based on this work, Lunghi \textit{et al.}~devised a multi-round bit commitment protocol involving only four agents, two for Alice and two for Bob \cite{LKB+15}. They managed to prove that this protocol, which we call the ``$\FQ$ protocol'' from now on, remains secure for several rounds, against classical attacks. Unfortunately, this proof was rather inefficient since the complexity of the protocol (the size of the messages the agents need to exchange at each round) scaled exponentially with the number of rounds. 
Recently, two papers improved the security proof and showed that the complexity of the protocol in fact only scales logarithmically with the number of rounds \cite{CCL15,FF16}, implying that the commitment time is essentially unlimited:
\begin{theorem}[\cite{CCL15,FF16}]\label{Theorem:UpperBound}
	The $\FQ$ relativistic $m$-round bit commitment protocol is $\eps$-binding with $\eps = \OO(\frac{m}{\sqrt{Q}})$ against classical adversaries, meaning that Alice's cheating probability is at most $\frac{1}{2} + \OO(\frac{m}{\sqrt{Q}})$.
\end{theorem}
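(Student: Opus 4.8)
The plan is to reduce the binding property of the multi-round $\FQ$ protocol to the classical value of the $\CHSH_Q$ game and then to absorb the $m$ rounds through a recursion. First I would fix the formalization: a classical strategy for dishonest Alice is a pair of agents sharing prior randomness but unable to communicate (enforced by NSS), and I would work with the sum-binding measure, writing $p_d$ for the probability that Bob accepts an opening to bit $d$ and declaring the protocol $\eps$-binding when $p_0 + p_1 \le 1 + 2\eps$ for every strategy (equivalently, cheating probability $\le \frac12 + \eps$). The key structural observation is that, at the level of a single round, the configuration in which one agent receives Bob's challenge and must answer while the second agent must produce an opening consistent with both bit values — without communication — is precisely a $\CHSH_Q$-type game: the two agents win exactly when the sum of their $\F_Q$-valued outputs equals the product of their respective $\F_Q$ challenges. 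I would make this a formal reduction $\text{(single-round binding advantage)} \red \CHSH_Q$, so that the per-round advantage is controlled by $\Cq$, the classical value of the game.

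Second, I would import the bound on $\Cq$. For a deterministic strategy $a = f(x)$, $b = g(y)$, the winning condition $f(x)+g(y)=xy$ can hold on at most $\OO(Q^{3/2})$ of the $Q^2$ pairs $(x,y)$: if two pairs $(x,y),(x,y')$ in the same column both win, then $x = (g(y)-g(y'))/(y-y')$ is forced, an incidence/additive-combinatorial estimate of exactly the flavor used in \cite{BS15}. Averaging over shared randomness then gives $\Cq = \OO(1/\sqrt Q)$, which is the sole source of the $1/\sqrt Q$ factor in the statement. I would treat this step essentially as a black box.

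Third — and this is the crux — I would set up the recursion over rounds. The $\FQ$ protocol is chained: the opening data produced in round $i$ plays the role of the commitment that is sustained in round $i+1$, so Alice's $m$-round strategy is adaptive and the rounds are genuinely correlated. Writing $p_m$ for the optimal cheating probability with $m$ rounds, I would aim to prove a one-round inequality of the form $p_m \le p_{m-1} + \OO(\Cq)$ by conditioning on the transcript of the first $m-1$ rounds and arguing that, for each fixed history, the remaining round collapses to a fresh $\CHSH_Q$ instance whose advantage is at most $\Cq$; averaging over histories (using concavity to pull the expectation outside the game value) preserves the additive bound. Iterating from the base case $p_0 = \frac12$ then yields $p_m \le \frac12 + \OO(m/\sqrt Q)$, i.e. $\eps = \OO(m/\sqrt Q)$.

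The main obstacle is precisely this recursive step, and it is where \cite{CCL15,FF16} improve on \cite{LKB+15}. Two failure modes must be ruled out: the per-round loss could compound multiplicatively rather than additively, giving an exponential-in-$m$ blow-up (the weakness of the original analysis), and the no-signaling structure must be maintained across rounds so that conditioning on the history really does isolate one independent $\CHSH_Q$ game rather than a correlated family of them. Securing an additive $+\OO(\Cq)$ per round — rather than, say, a square-root loss that would ruin the exponent — requires carefully choosing the quantity tracked in the recursion (a suitably normalized consistency measure rather than the raw acceptance probability) and exploiting the linearity of the $\F_Q$ relations defining the protocol; I expect essentially all of the technical effort to live here.
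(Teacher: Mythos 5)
First, a point of order: Theorem~\ref{Theorem:UpperBound} is not proved in this paper at all --- it is imported from \cite{CCL15,FF16} and serves only as the benchmark that the attack of Theorem~\ref{Theorem:Main} is measured against. So your proposal can only be compared with those cited works, and against them it has a genuine gap.

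Your skeleton is right: the per-round advantage is governed by $\Cq$, your incidence-counting sketch of $\Cq = \OO(1/\sqrt{Q})$ is correct, and the losses must be shown to accumulate additively. But the recursion $p_m \le p_{m-1} + \OO(\Cq)$ is precisely the step you defer, and the version you sketch does not go through. The sum-binding quantity $p_m = \frac12\left(\Pr[\mbox{accept } 0] + \Pr[\mbox{accept } 1]\right)$ refers to two \emph{different} executions, since all of Alice's messages after the commit may depend on the bit $d$ she is trying to open; hence ``conditioning on the transcript of the first $m-1$ rounds'' is not a well-defined operation on this quantity --- there is one transcript per counterfactual, and the event that must be bounded by $\Cq$ is a coincidence \emph{between} the two. (Already in the three-round case the relevant event is $\bigl(y_3^{(1)}-y_3^{(0)}\bigr)+\bigl(y_2^{(0)}-y_2^{(1)}\bigr)=x_1 x_2$, a relation between the $d=0$ and $d=1$ runs, not an event inside either run.) Moreover, once you condition on a history, the inputs of the putative ``fresh'' game are no longer uniform nor independent of the agents' information, so the conditional value is not bounded by $\Cq$, and the concavity argument you invoke is not available. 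This paper itself flags the obstruction in Section~\ref{Section:Preliminaries}: sum-binding ``doesn't necessarily behave well under composition,'' which is exactly why neither cited proof runs your recursion on the raw acceptance probability. Instead, \cite{FF16} introduces a strictly stronger binding notion that is proved to compose additively and shows the single-round $\FQ$ scheme satisfies it with parameter $\OO(1/\sqrt{Q})$, while \cite{CCL15} bounds the value of a sequential $\CHSHQ$-type game tailored to the entire consistency chain. Identifying the composable quantity and proving it loses only $\OO(\Cq)$ per round is the actual content of the theorem; your proposal names the difficulty (``a suitably normalized consistency measure'') but does not supply it.
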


While the two proofs of this fact are very different, they rely to some extent on the analysis of $\CHSHQ$, a non-signaling game that generalizes the well-known $\CHSH$ game to the case where inputs and outputs are not restricted to being bits, but rather belong to $\FQ$ the Galois Field of order $Q$.

Notice that in the way the cheating probability is defined, a perfectly secure protocol will have cheating probability of $\frac{1}{2}$ for both Alice and Bob.  So an $\eps$-secure (here $\eps$-binding) protocol will have a cheating probability of $\frac{1}{2} + \eps$. The protocol has (stand-alone) security when $\eps$ is small.

The above result shows that the protocol is secure as long as $m \ll \sqrt{Q}$ but it was not known for larger values of $Q$, in particular when $m$ approaches, or even exceeds $\OOmega(\sqrt{Q})$. Very recently, this protocol has been implemented by keeping the agents $7$ km apart and demonstrated a sustain period of $24$ hours \cite{VMH+16}. Also, it is important to know that the number of bits sent at each round is $\log(Q)$ and therefore $Q$ can be efficiently made exponentially big in the security parameter.

 Until now, no cheating strategy has been proposed for this scheme.

\COMMENTZ{A recent result, published in the quant-ph arXiv, claims to improve the above result as follows 
\begin{CClaim}[\cite{PPP16}]
	The $\FQ$ relativistic $m$-round bit commitment protocol is $\eps$-binding with $\eps = \OO(\sqrt{\frac{m}{{Q}}})$ against classical adversaries, meaning that Alice's cheating probability is at most $\frac{1}{2} + \OO(\sqrt{\frac{m}{{Q}}})$.	
\end{CClaim}}

\subsection{Contributions}
Our main contribution is to present the first attack on the $\F_Q$ protocol. We show the following 
\begin{theorem}\label{Theorem:Main}
	There exists an attack on the $m$-round $\F_Q$ protocol in which Alice's cheating probability is 
	$$ 1 - \dfrac{1}{2}\left(\left( 1-\dfrac{1}{Q} \right)\left( 1-\Cq \right)\right)^{\lfloor\frac{m-1}{3}\rfloor} $$
	where $\omega(\CHSH_Q)$ is the classical value of the $\CHSH_Q$ game and $m \ge 3$. 
\end{theorem}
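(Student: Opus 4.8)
The plan is to construct Alice's cheating strategy \emph{recursively}, mirroring the recursive structure of the chained $\FQ$ protocol, and to feed a good classical $\CHSHQ$ strategy into each level of the recursion. Recall that Alice breaks the binding property exactly when she can produce, at reveal time, consistent openings for \emph{both} $d=0$ and $d=1$; her cheating probability is the probability that she can open to whichever bit she later chooses. First I would isolate the algebraic relation that Bob verifies at reveal: in the multi-round scheme each round's response is tied to the committed bit and to the previous round's opening data by an affine relation over $\FQ$ of the form ``response $=$ bit $\cdot$ challenge $+$ pad'', so that the $m$-round transcript decomposes into one initial commitment round followed by $\lfloor (m-1)/3\rfloor$ essentially independent blocks of three rounds, each re-randomised by fresh shared secrets.

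The core step is the single-block cheat. Within one three-round block I would show that the consistency equations Alice must satisfy in order to keep \emph{both} openings alive coincide with the winning condition $a+b = x\cdot y$ of a $\CHSHQ$ instance, where the two challenges Bob sends in the block play the roles of the game inputs $x,y$ and Alice's two responses encode the outputs $a,b$. Plugging in the best classical strategy of \cite{BS15}, which wins with probability $\Cq=\omega(\CHSH_Q)$, Alice propagates her double-opening through the block whenever the embedded game is won. Even when the game is lost (probability $1-\Cq$), the residual mismatch is uniform over $\FQ$, so Bob's single field-equation check still passes by pure chance with probability $1/Q$. Hence one block succeeds with probability $1-(1-\Cq)(1-1/Q)$ and fails with probability exactly $(1-1/Q)(1-\Cq)$.

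Finally I would assemble the recursion. Because the blocks use independent shared randomness, the failure events are independent, and a \emph{single} successful block already lets Alice open to both bits; if every block fails she is reduced to guessing the required opening, which succeeds with probability $\tfrac12$. Writing $g=(1-1/Q)(1-\Cq)$ and $k=\lfloor (m-1)/3\rfloor$, the probability that all $k$ attempts fail is $g^{k}$, so the overall cheating probability is
\[
(1-g^{k})\cdot 1 + g^{k}\cdot\tfrac12 \;=\; 1-\tfrac12\,g^{k}
\;=\; 1-\frac{1}{2}\Big(\big(1-\tfrac1Q\big)\big(1-\Cq\big)\Big)^{\lfloor (m-1)/3\rfloor},
\]
matching the claimed bound; the base case $m=3$ gives $k=0$ and the honest value $\tfrac12$.

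The \textbf{main obstacle} I anticipate is the single-block reduction: one must pin down \emph{exactly} which three consecutive rounds of the chain map onto one $\CHSHQ$ instance and verify that a game win yields a genuinely consistent double-opening rather than merely a local match --- in particular that the propagated opening data is correctly distributed to serve as the input to the next block. Closely related is justifying the independence of the blocks: one has to check that a \emph{failed} attempt does not irrevocably commit Alice to a single bit, so that after failure she truly re-enters the situation of the shorter $(m-3)$-round protocol; this is what makes the recursion --- and hence the clean product form --- valid.
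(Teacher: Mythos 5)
Your proposal has the same architecture as the paper's proof --- three-round blocks, a $\CHSH_Q$ instance embedded in each block, an extra per-block $1/Q$ chance, independence across blocks, and the accounting $(1-g^{k})\cdot 1+g^{k}\cdot\frac{1}{2}=1-\frac{1}{2}g^{k}$ with $g=(1-\frac{1}{Q})(1-\omega(\CHSH_Q))$ --- and the final formula is the right one. However, the step you yourself flag as the ``main obstacle'' is not a routine verification: it is the actual content of the paper's proof, and your sketch does not contain the device that makes it work. The paper's recursive strategy enters each block with an accumulated mismatch $\eta$ (a known function of $d,x_1,\dots,x_k$, computable by \emph{both} agents only after the two-round propagation delay --- this is exactly why the first round of each block is sacrificed with answer $0$), has the two agents play the optimal $\CHSH_Q$ strategy on the fresh challenges $x_{k+2}$ and $x_{k+1}$, and answers $\tilde y_{k+2}=\eta\cdot a$ and $\tilde y_{k+3}=\eta\cdot b\cdot x_{k+3}$: the game outputs are \emph{scaled by the current mismatch}. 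This scaling is what makes the consistency condition after the block factor as $\eta\cdot x_{k+3}\cdot(a+b-x_{k+1}x_{k+2})=0$, which simultaneously (i) turns a game win into exact cancellation of whatever error was accumulated, (ii) guarantees that a mismatch already equal to $0$ stays $0$ regardless of the game outcome --- this is precisely your worry that a failed attempt might ``irrevocably commit'' Alice, and it cannot be dispatched without the explicit construction --- and (iii) exhibits the block event as a disjunction of events depending on disjoint challenge variables, which is the true source of the independence (not ``independent shared randomness'': the attack never uses the honest secrets $a_i$, and the relevant randomness is Bob's challenges). Relatedly, the per-block $1/Q$ is the event $x_{k+3}=0$, i.e.\ the block's third challenge annihilating the error term; describing it as ``Bob's single field-equation check passing by chance'' only becomes meaningful once this product structure of the strategy is in place.

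The second genuine gap is the last round. In $\mathcal{P}_m$ the final round is asymmetric (Alice reveals $d$ and $a_{m-1}$ with no fresh challenge), so no uniform block decomposition can contain it, yet your bookkeeping (``one initial round plus $\lfloor\frac{m-1}{3}\rfloor$ blocks of three'') forces the reveal round into the last block, where your game embedding does not literally apply. The paper devotes a separate symmetrization lemma to this: it introduces a modified protocol $\mathcal{P}'_m$ whose last answer is also challenge-multiplied, proves $g_m\le g'_m\le g_{m+1}$, runs the recursion entirely inside $\mathcal{P}'$, and converts a strategy for $\mathcal{P}'_{3k}$ into one for $\mathcal{P}_{3k+1}$ by answering $0$ in the extra round; this conversion is exactly what produces the exponent $\lfloor\frac{m-1}{3}\rfloor$ and the hypothesis $m\ge 3$. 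Your proposal needs either this lemma or a direct (and messier) treatment of the final block; as written it has neither.
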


In \cite{BS15}, it was shown for any prime $p$ and integer $n$ that 
$$ \omega(\CHSH_Q) = \left\{
\begin{array}{ll}
\OOmega(\sqrt{\frac{1}{Q}}) \quad & \textrm{if } Q = p^{2n} \\
\OO(Q^{- \frac{1}{2} - \eps_0}) \quad & \textrm{if } Q = p^{2n + 1} \\
\end{array}
\right. $$
where $\eps_0$ is a constant proven to be positive.

In particular, Theorem \ref{Theorem:Main}, combined with the above bound, shows that the upper bound of \cite{CCL15,FF16} (Theorem \ref{Theorem:UpperBound}) is essentially optimal when considering an even prime power :
\begin{corollary}
	For any integer $n$, prime number $p$ and $Q = p^{2n}$ there is a cheating strategy for Alice that achieves success probability 
	$$ 1 -  \frac{1}{2}\left(1 - \OOmega(\frac{1}{\sqrt{Q}})\right)^{\lfloor\frac{m - 1}{3}\rfloor}. $$
	\begin{itemize}
		\item If $m \ll \sqrt{Q}$ then the above cheating probability is equal to $\frac{1}{2} + \OOmega(\frac{m}{\sqrt{Q}}) - o(\frac{m}{\sqrt{Q}})$.
		\item If $m = t \sqrt{Q}$ then the above cheating probability is lower bounded by $1 - \frac{1}{2} e^{-\frac{t}{3}}$, which quickly approaches $1$ as $t$ increases.
	\end{itemize}
\end{corollary}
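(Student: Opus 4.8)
The plan is to substitute the \cite{BS15} lower bound on $\omega(\CHSH_Q)$ for even prime powers into the exact cheating probability of Theorem~\ref{Theorem:Main}, and then to carry out an elementary asymptotic analysis of the resulting expression in the two regimes $m \ll \sqrt{Q}$ and $m = t\sqrt{Q}$. Write the cheating probability of Theorem~\ref{Theorem:Main} as $1 - \frac{1}{2}a^{k}$ with base $a = (1-\tfrac{1}{Q})(1-\omega(\CHSH_Q))$ and exponent $k = \lfloor\frac{m-1}{3}\rfloor$. Since $\omega(\CHSH_Q) = \OOmega(1/\sqrt{Q})$ when $Q = p^{2n}$, I set $\delta_0$ to be this lower bound and observe that $a \le 1 - \omega(\CHSH_Q) \le 1 - \delta_0 = 1 - \OOmega(1/\sqrt{Q})$, the factor $(1-\tfrac{1}{Q}) \le 1$ being harmless and the term $1/Q$ negligible next to $1/\sqrt{Q}$. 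Because $x \mapsto 1 - \frac{1}{2}x^{k}$ is decreasing for $x \ge 0$, replacing $a$ by the larger quantity $1-\delta_0$ only lowers the value, so the attack of Theorem~\ref{Theorem:Main} succeeds with probability at least $1 - \frac{1}{2}(1-\delta_0)^{k} = 1 - \frac{1}{2}(1 - \OOmega(1/\sqrt{Q}))^{\lfloor(m-1)/3\rfloor}$, which is the displayed bound. This monotonicity is the key point that lets me work throughout with the clean lower bound $\delta_0$ rather than the exact, and possibly larger, value of $\delta := 1 - a$.

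For the first regime, when $m \ll \sqrt{Q}$, I would expand $1 - (1-\delta_0)^{k}$ by the Bernoulli-type estimate $1-(1-\delta_0)^{k} = k\delta_0 - O(k^2\delta_0^2)$, valid here because $k\delta_0 = \Theta(m/\sqrt{Q}) \to 0$. The leading term contributes an advantage $\frac{1}{2}k\delta_0 = \OOmega(m/\sqrt{Q})$ over $\tfrac{1}{2}$, while the quadratic correction $O(k^2\delta_0^2) = O((m/\sqrt{Q})^2) = o(m/\sqrt{Q})$ is absorbed into the error term, yielding a cheating probability $\frac{1}{2} + \OOmega(m/\sqrt{Q}) - o(m/\sqrt{Q})$ exactly as stated.

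For the second regime, when $m = t\sqrt{Q}$, I would instead use $1-\delta_0 \le e^{-\delta_0}$, so that $(1-\delta_0)^{k} \le e^{-k\delta_0}$. Taking the \cite{BS15} constant so that $\delta_0 \ge 1/\sqrt{Q}$ and using $k = \lfloor\frac{m-1}{3}\rfloor = \frac{m}{3}(1-o(1))$, I get $k\delta_0 \ge \frac{t}{3}(1-o(1))$ and hence $(1-\delta_0)^{k} \le e^{-t/3}$ in the limit of large $Q$, so the cheating probability is lower bounded by $1 - \frac{1}{2}e^{-t/3}$, which tends to $1$ as $t$ grows.

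The computations are routine; the only points requiring care are bookkeeping rather than genuine mathematical difficulty. I must track the hidden constant in the \cite{BS15} $\OOmega(1/\sqrt{Q})$ bound, since it is precisely this constant, together with the factor $3$ coming from $\lfloor\frac{m-1}{3}\rfloor$, that fixes the exponent $t/3$ in the second bullet; and I must handle the floor in $k$ and verify that the rounding, the $(1-\tfrac{1}{Q})$ factor, and the second-order Taylor term are all genuinely lower order in each regime. Notably, no matching upper bound on $\omega(\CHSH_Q)$ is needed, because every estimate is a one-sided lower bound on Alice's success probability derived solely from the lower bound $\delta_0$.
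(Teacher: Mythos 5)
Your proposal is correct and takes essentially the same route as the paper, which states this corollary without a separate proof as an immediate combination of Theorem~\ref{Theorem:Main} with the bound $\omega(\CHSH_Q) = \OOmega(1/\sqrt{Q})$ of \cite{BS15} for $Q = p^{2n}$; your monotonicity step, the Bernoulli-type expansion for $m \ll \sqrt{Q}$, and the use of $1-x \le e^{-x}$ for $m = t\sqrt{Q}$ are exactly the routine details the paper elides. If anything, your tracking of the hidden constant and of the floor in $\lfloor\frac{m-1}{3}\rfloor$ is more careful than the paper's own statement, which implicitly absorbs that constant into the exponent $t/3$.
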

From the above bounds, we can conclude that up to constant factors, our attack is optimal when $Q$ is an even power of a prime. 

We note also that there is an upper bound on the value of $\CHSH_Q$ when $Q$ is an odd power of a prime. In this case, we have $\omega(\CHSH_Q) = \OOmega(Q^{-2/3})$ \cite{BS15,PP16}. From there, we have

\begin{corollary}
	For any integer $n$, prime number $p$ and $Q = p^{2n + 1}$ there is a cheating strategy for Alice that achieves success probability 
	$$ 1 - \frac{1}{2}\left(1 - \OOmega(Q^{-2/3})\right)^{\lfloor\frac{m-1}{3}\rfloor}. $$
\end{corollary}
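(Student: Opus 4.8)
The plan is to derive this corollary immediately from Theorem~\ref{Theorem:Main} by substituting the known lower bound on the classical value of the $\CHSHQ$ game in the odd-prime-power regime. Concretely, for $Q = p^{2n+1}$ the results of \cite{BS15,PP16} guarantee a constant $c > 0$ such that the classical value $\Cq$ of the $\CHSHQ$ game satisfies $\Cq \ge c\,Q^{-2/3}$ for all sufficiently large $Q$. I would first record this as a genuine \emph{lower} bound and observe that $\Cq$ is exactly the quantity one wants to make large in Theorem~\ref{Theorem:Main}: a larger $\Cq$ shrinks the base $(1-\tfrac1Q)(1-\Cq)$ and therefore increases Alice's success probability $1 - \tfrac12\bigl((1-\tfrac1Q)(1-\Cq)\bigr)^{\lfloor (m-1)/3\rfloor}$.

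The only real content is the asymptotic simplification of the base. I would bound
\[
\left(1-\frac{1}{Q}\right)\left(1-\Cq\right) \;\le\; 1-\Cq \;\le\; 1 - c\,Q^{-2/3},
\]
using $1-\tfrac1Q \le 1$ and $\Cq \ge c\,Q^{-2/3}$. This already shows the base is at most $1 - \OOmega(Q^{-2/3})$; the factor $1-\tfrac1Q$ costs nothing because $Q^{-1} = o(Q^{-2/3})$, so the $Q^{-2/3}$ rate is untouched. If a two-sided form is preferred, the same expansion gives $(1-\tfrac1Q)(1-\Cq) = 1 - \Theta(Q^{-2/3})$ whenever $\Cq = \Theta(Q^{-2/3})$, since the cross terms $Q^{-1}$ and $Q^{-5/3}$ are both of lower order than $Q^{-2/3}$.

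To conclude I would invoke monotonicity of the map $x \mapsto 1 - \tfrac12\,x^{\lfloor (m-1)/3\rfloor}$, which is non-increasing on $[0,1]$ for the nonnegative integer exponent $\lfloor (m-1)/3\rfloor$ (well-defined since $m \ge 3$), and note that the base lies in $[0,1]$ for $Q \ge 1$ and $0 \le \Cq \le 1$. Applying it to the inequality above yields
\[
1 - \frac{1}{2}\left(\left(1-\frac{1}{Q}\right)\left(1-\Cq\right)\right)^{\lfloor\frac{m-1}{3}\rfloor}
\;\ge\;
1 - \frac{1}{2}\left(1 - \OOmega(Q^{-2/3})\right)^{\lfloor\frac{m-1}{3}\rfloor},
\]
and the left-hand side is precisely Alice's cheating probability from Theorem~\ref{Theorem:Main}. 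This is the claimed bound.

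The step I expect to need the most care is the bookkeeping with the $\OOmega$/$\Theta$ notation inside the base: one must track both the direction of the inequality (a lower bound on $\Cq$ must propagate to a lower bound on the success probability through the non-increasing map) and the harmlessness of the spurious $1-\tfrac1Q$ factor relative to the $Q^{-2/3}$ rate. Beyond this there is no genuine obstacle, since the substantive work---both the attack underlying Theorem~\ref{Theorem:Main} and the $\CHSHQ$ value estimate of \cite{BS15,PP16}---is already established.
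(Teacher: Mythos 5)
Your proposal is correct and follows essentially the same route as the paper: the paper derives this corollary by directly substituting the lower bound $\Cq = \OOmega(Q^{-2/3})$ for odd prime powers (from \cite{BS15,PP16}) into the cheating probability of Theorem~\ref{Theorem:Main}, exactly as you do. Your explicit treatment of the monotonicity of $x \mapsto 1 - \tfrac{1}{2}x^{\lfloor (m-1)/3\rfloor}$ and of the harmless $(1-\tfrac{1}{Q})$ factor merely spells out steps the paper leaves implicit.
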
	
	
	 and if $m = tQ^{\frac{2}{3}}$ then Alice can cheat with probability $1 - \frac{1}{2}e^{-\frac{t}{3}}$ which quickly converges to $1$ as $t$ increases. \\

This result also shows that even an improved bound on $\omega(\CHSHQ)$ variants presented in \cite{PPP16} cannot be used to improve - except in the constants - the security of the $\FQ$ protocol, at least for even prime powers of $Q$. \\

Our second contribution is an extension of this attack to more realistic scenarios from the attacker's point of view. 
In the relativistic model, we assume that cheating Alice can perform communications between $\mathcal{A}_1$ and $\mathcal{A}_2$ such that both agents of Alice know exactly the whole transcript of the protocol, except the last round message sent to the other Alice. Proving security in this setting allows us to minimize the spacetime requirements in order to achieve security.

However, our attack also assumes this power for cheating Alice and this could be very challenging in practice. Therefore, we introduce the notion of propagation time which corresponds to the number of rounds $\rho$ that can pass until the Alices are able to send some information to one another. In the original model, this propagation time is $2$ rounds. We perform extend Theorem \ref{Theorem:Main} to the following setting
\begin{itemize}
	\item The propagation time $\rho$ can be larger than $2$.
	\item The two Alices know the bit they want to reveal only after $k_0 \geq 1$ rounds. We call $k_0$ the decision time.
\end{itemize}
Showing that the attack works in this setting ensures that simple countermeasures consisting of increasing the distance between the two pairs will not significantly reduce the efficiency of the attack. We show the following:

\begin{theorem}\label{Theorem:PropagationTime}
	For any propagation time $\rho \ge 2$, and any decision time $k_0$, there exists an attack on the $m$-round $\F_Q$ protocol where Alice's cheating probability is 
	$$ 1 - \dfrac{1}{2}\left(( 1-\dfrac{1}{Q})\left( 1-\omega(\CHSHQ) \right)\right)^{\lfloor\frac{m - k_0 - 1}{\rho + 1}\rfloor}. $$
	for $m \ge k_0 + 2$.
\end{theorem}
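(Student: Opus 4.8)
The plan is to obtain Theorem~\ref{Theorem:PropagationTime} by rerunning the recursive cheating strategy that underlies Theorem~\ref{Theorem:Main}, after isolating exactly how the two constants in the exponent $\lfloor\frac{m-1}{3}\rfloor$ arise and replacing them by the appropriate functions of $\rho$ and $k_0$. Write $X=(1-\frac{1}{Q})(1-\omega(\CHSHQ))$ for the per-block factor. The attack of Theorem~\ref{Theorem:Main} partitions the rounds into consecutive blocks; in each block Alice invokes the classical $\CHSHQ$ strategy of \cite{BS15} to manufacture a transcript consistent with \emph{both} opening bits, and, conditioned on the earlier blocks, a block fails to produce this double-opening capability with probability exactly $X$. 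If $p_k$ denotes Alice's cheating probability when $k$ blocks are available, this yields the recursion $p_k=(1-X)+X\,p_{k-1}$ with base case $p_0=\tfrac12$ (if every block fails she is reduced to an honest commitment to a uniformly random bit, which coincides with the requested bit with probability $\tfrac12$), and unrolling gives $p_k=1-\tfrac12 X^{k}$. The two quantities I must track are therefore the block length, which is $3$ in the base model, and the number of rounds unavailable to the recursion, which is $1$ there.

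First I would re-derive the block length from the relativistic constraints. In the original model the propagation time is $2$, so it takes two rounds for the transcript seen by one of Alice's agents to reach the other; a cheating block consequently spans $3=2+1$ rounds, namely the $\rho=2$ rounds needed for the relevant information to propagate to the agent executing the fake opening, plus the round in which that opening is performed. For a general propagation time $\rho\ge 2$ the same reasoning forces each block to occupy $\rho+1$ rounds. The essential point, which I would establish by drawing the spacetime diagram for arbitrary $\rho$, is that the no-signaling windows exploited in distinct blocks stay \emph{disjoint} once blocks are stretched to length $\rho+1$; consequently the per-block failures remain independent, and since within a block Alice still solves a single $\CHSHQ$ instance regardless of the block length, the failure probability of a block is still exactly $X$, unaffected by $\rho$.

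Next I would incorporate the decision time $k_0$. Because both Alices learn the bit to be revealed only after $k_0$ rounds, no block may begin before round $k_0+1$: the first $k_0$ rounds have to be played \emph{bit-obliviously} and cannot host a cheat. Removing these $k_0$ rounds together with the single fixed setup round that the recursion never uses (the ``$-1$'' already present in the base model) leaves $m-k_0-1$ usable rounds, which partition into $k=\lfloor\frac{m-k_0-1}{\rho+1}\rfloor$ blocks. Substituting this $k$ into $p_k=1-\tfrac12 X^{k}$ gives Alice's cheating probability $1-\tfrac12\big((1-\tfrac{1}{Q})(1-\omega(\CHSHQ))\big)^{\lfloor(m-k_0-1)/(\rho+1)\rfloor}$, valid as soon as $m\ge k_0+2$ so that at least one block fits; formally setting $\rho=2$ and $k_0=0$ reproduces Theorem~\ref{Theorem:Main}.

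The step I expect to be the main obstacle is the re-derivation of the spacetime diagram for arbitrary $\rho$: I must check that Alice's agents can be positioned so that the propagation time is exactly $\rho$ while every internal communication of the attack still respects no superluminal signaling, and that the extra $\rho-2$ rounds of delay inside each block do not let Bob correlate his challenges across blocks in a way that would destroy independence or push a block's failure probability above $X$. A secondary point to verify is that the honest protocol's bit-dependence is confined to the rounds after $k_0$, so that the decision-time constraint really costs exactly $k_0$ rounds and interacts with the block decomposition only through the count $m-k_0-1$.
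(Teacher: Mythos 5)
Your high-level accounting matches the paper's: blocks of length $\rho+1$, the geometric recursion $p_k=(1-X)+X\,p_{k-1}$ with $p_0=\frac{1}{2}$, the $k_0$ decision rounds plus one symmetrization round excluded from the count, and the floor recovered from monotonicity of $g_m$. But the step you yourself flag as the ``main obstacle'' --- that each block's failure probability is \emph{exactly} $X=(1-\frac{1}{Q})(1-\omega(\CHSHQ))$, ``unaffected by $\rho$,'' because Alice ``still solves a single $\CHSHQ$ instance'' --- is a genuine gap, and resolving it is not a matter of drawing a spacetime diagram. When $\rho>2$, both agents learn the prefix $d,x_1,\dots,x_k$ (hence the correction term $\eta$) only at round $k+\rho$, so they must output $0$ in rounds $k+1,\dots,k+\rho-1$; the consistency check then forces the block's winning condition to be $(x_{k+\rho+1}=0)\vee(\eta=0)\vee(a+b=XY)$, where $X=\prod_{j \textrm{ odd},\, k+1\le j\le k+\rho} x_j$ is known to one agent only and $Y=\prod_{j \textrm{ even},\, k+1\le j\le k+\rho} x_j$ to the other only. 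The nonlocal game played inside a block is therefore \emph{not} $\CHSH_Q$ on uniform inputs: its inputs are products of $\rho/2$ uniform field elements, which equal $0$ with probability $\gamma=1-(1-\frac{1}{Q})^{\rho/2}$ and are uniform on $\F_Q^*$ otherwise. This biased game is what the paper calls $\CHSH_Q^\gamma$.

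Consequently the per-block failure probability is $(1-\frac{1}{Q})(1-\omega(\CHSH_Q^\gamma))$, and to obtain the stated theorem one must prove $\omega(\CHSH_Q^\gamma)\ge\omega(\CHSH_Q)$. That inequality is true but not free: the paper devotes Lemma~\ref{Lemma:CHSH_Q^delta} to it, via a shift argument --- take an optimal deterministic strategy $(s_1,s_2)$ for $\CHSH_Q$, note that the biased-input winning probability averaged over all translations $(u,v)$ of the inputs equals $\omega(\CHSHQ)$, fix a good pair $(u,v)$, and pass to the shifted strategy $s'_1(x)=s_1(x+u)-xv$, $s'_2(y)=s_2(y+v)-yu-uv$, which wins the biased game with probability at least $\omega(\CHSHQ)$. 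Your proposal contains no substitute for this step; without it, the block recursion (the paper's Lemma~\ref{Lemma:StepExtension}) only yields $g'_{k+\rho+1}\ge 1-(1-\frac{1}{Q})(1-\omega(\CHSH_Q^\gamma))(1-g'_k)$, which does not by itself give the claimed bound in terms of $\omega(\CHSHQ)$. The independence of the three per-block events and your treatment of $k_0$ (trivial initialization $g'_{k_0}\ge\frac{1}{2}$, one round lost to symmetrization) are otherwise in line with the paper.
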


\noindent {\bf Organisation} --- In Section \ref{Section:Preliminaries}, we present the $\F_Q$ protocol as well as the $\CHSH_Q$ game. In Section \ref{Section:TheAttack}, we present our main result, namely the attack on the $\F_Q$ protocol. Finally, in Section \ref{Section:AttackExtension}, we present the extension of this attack to more realistic scenarios.

\section{Preliminaries}\label{Section:Preliminaries}
\subsection{Bit commitment}
\emph{Bit commitment} is a cryptographic primitive between two distrustful parties Alice and Bob which consists of $2$ phases: a \emph{Commit phase} and a \emph{Reveal phase}. Alice has a random bit $d$ at the beginning of the protocol. In the commit phase, Alice will commit to this value $d$ by performing some communication protocol such that at end of the commit phase, Bob knows no information about $d$. In the second phase, the reveal phase, Alice and Bob also perform some communication which results in Alice revealing $d$.  A desired property here is that Alice is unable to reveal a bit different from the one chosen during the commit phase. 

In some sense, a bit commitment protocol simulates a digital safe. In the commit phase, Alice writes her input $d$ on a piece of paper, puts that paper into the safe and sends the safe to Bob. If Bob has no information about the key safe then he cannot open it and therefore has no information about $d$. In the reveal phase, Alice would send to Bob the key to open the safe. But she cannot change the value of the bit in the safe because Bob has control of the safe. This primitive has been widely studied. However,  bit commitment can only be performed with computational security in most usual models. 

We now give a formal definition of the bit commitment scheme.

\begin{definition}
	A bit commitment scheme  is an interactive protocol between Alice and Bob with two phases, a Commit phase and a Reveal phase.
	
	\begin{itemize}
		\item \emph{Commit phase}. Alice chooses a uniformly random input $d$ that she wants to commit to. To do so, Alice and Bob perform a communication protocol that corresponds to this commit phase. 
		\item \emph{Reveal phase}. Alice interacts with Bob in order to reveal $d$. To do so, they perform a second communication protocol where at the end, Bob should know the value revealed by Alice. Bob, depending on this revealed value and the interaction with Alice, outputs ``Accept" or ``Reject".  
	\end{itemize}
\end{definition}
\noindent We also define the following security requirements for the commitment scheme.
\begin{definition}
	A bit commitment protocol is said to be correct if when both players are honest, Bob never outputs ``Reject".
\end{definition}
A cheating strategy $S$ for Alice can be therefore decomposed into a cheating strategy $S_{commit}$ for the commit phase and $S_{reveal}$ for the reveal phase and we will usually write $S = (S_{commit},S_{reveal})$. The goal of a cheating Alice is to choose the value she wants to reveal only after the commit phase. The reveal strategy $S_{reveal}$ will depend on the value $d$ she wants to reveal. We denote by $S_{reveal}(d)$ Alice's cheating strategy in the reveal phase for a fixed $d$.

\begin{definition}
	For a fixed cheating strategy $S = (S_{Commit},S_{reveal})$ for Alice, we define Alice's cheating probability $P^*_A(S)$ as
	\begin{align*} 
	P^*_A(S) & := 
	\frac{1}{2} \Pr[\mbox{ Alice successfully reveals } d = 0 | (S_{Commit},S_{reveal}(0))] + \\
	& 	\qquad \frac{1}{2} \Pr[\mbox{ Alice successfully reveals } d = 1 | (S_{Commit},S_{reveal}(1)) ].
	\end{align*}
\end{definition}

\begin{definition}
	We define Alice's optimal cheating probability $P^*_A$ as 
	$$ P^*_A := \max_{S = (S_{Commit},S_{reveal})} P^*_A(S).  $$
	We say that a bit commitment is $\eps$ sum-binding if $P^*_A \le \frac{1}{2} + \eps$.
\end{definition}

Here, we used one of several possible definitions for the binding property. This definition is weak, since it doesn't necessarily behave well under composition. In order to prove security, even for relativistic bit commitment protocols, some stronger definitions of security are used (see for example \cite{FF16}). While using a stronger security definitions strengthens upper bounds on the cheating probability, it is by using the weakest security definition that we have the strongest lower bounds on those cheating probabilities. Since in this paper, we present cheating strategies, $\ie$ lower bounds, we use the weak notion of sum-binding. 

Another security condition we want to ensure is the hiding property. At the end of the commit phase, we don't want Bob to have a lot of information about the committed bit $d$. This means that to ensure the hiding property, we will only be interested in a cheating Bob's strategy during the commit phase, and a cheating strategy $S^B$ for Bob will be a strategy that he will use to try to learn $d$ after the commit phase. 

\begin{definition}
	For a fixed cheating strategy $S^B$ for Bob, we define his cheating probability $P^*_B(S^B)$ as 
	$$ P^*_B(S^B) := \Pr[\mbox{Bob guesses } d \mbox{ after the Commit phase} | S^B]. $$
\end{definition}

\begin{definition}
	We define Bob's optimal cheating probability $P^*_B$ as 
	$$ P^*_B := \max_{S^B} P^*_B(S^B).  $$
	We say that a bit commitment is $\eps$-hiding is $P^*_B \le \frac{1}{2} + \eps$.
\end{definition}
\subsection{Relativistic bit commitment}\label{Section:DescriptionOfTheProtocol}
A relativistic bit commitment scheme is a commitment scheme where we use physical property that no information carrier can travel faster than the speed of light. In order to take advantage of this principle, we split Alice (resp. Bob) into $2$ agents $\mathcal{A}_1$ and $\mathcal{A}_2$ (respectively $\mathcal{B}_1$ and $\mathcal{B}_2$). For each $i \in \{1,2\}$, $\Alice_i$ interacts only with $\Bob_i$. If we put the two pairs $(\mathcal{A}_1,\mathcal{B}_1)$ and $(\mathcal{A}_2,\mathcal{B}_2)$ far apart, and use some timing constraints, we can create some non-signaling type scenarios. Here, we will only use the property that the two honest Bobs know their respective location. In particular, there is no trust needed regarding the location of the cheating parties. 

The security definitions for relativistic bit commitment are the ones we presented in the above Section. We will now describe the $\F_Q$ relativistic bit commitment scheme. This description will consist of $4$ phases, the preparation phase, the commit phase, the sustain phase and the reveal phase. The preparation phase is some preprocessing phase that can be done anytime before the protocol. The sustain phase can be seen as a part of the reveal phase, and corresponds to the time where the committed bit is safe. We assume here that the two Alices know at the beginning of the sustain phase the bit $d$ they want to reveal. In Section \ref{Section:AttackExtension}, we will relax this requirement.

{\bf The single-round $\F_Q$ protocol}. --- The single-round version corresponds to the protocol introduced by Cr{\'e}peau \etal \cite{CSST11} (see also \cite{sim07}).
Both players, Alice and Bob, have agents $\mathcal{A}_1, \mathcal{A}_2$ and $\mathcal{B}_1, \mathcal{B}_2$ present at two spatial locations, 1 and 2, separated by a distance $D$. We consider the case where Alice makes the commitment. 
The protocol (followed by honest players) consists of 4 phases : preparation, commit, sustain and reveal. The sustain phase in the single-round protocol is trivial and simply consists in waiting for a time less than $D/c$, which is the time needed for light to travel between the two locations. The bit commitment protocol goes as follows. 
\begin{enumerate}
	\item \emph{Preparation phase}: $\mathcal{A}_1,\mathcal{A}_2$ (resp.~$\mathcal{B}_1,\mathcal{B}_2$) share a random number $a\in \FQ$ (resp.~$x \in \FQ$).
	\item \emph{Commit phase}: $\mathcal{B}_1$ sends $x$ to $\mathcal{A}_1$, who returns $y = a + d \cdot x$ where $d \in \zo$ is the committed bit. 
	\item \emph{Sustain phase}: $\mathcal{A}_1$ and $\mathcal{A}_2$ wait for some time $\tau < D/c$.
	\item \emph{Reveal phase}: $\mathcal{A}_2$ reveals the values of $d$ and $a$ to $\mathcal{B}_2$ who checks that $y = a + d \cdot x$.
\end{enumerate}

{\bf The multi-round protocol}.--- \label{prot}
The protocol above was recently extended to a multi-round commitment scheme \cite{LKB+15}. The main idea to increase the commitment time is to delay the reveal phase and have $\mathcal{A}_2$ commit to the string $a$ instead of revealing it. In fact, the new sustain phase will now consist of many rounds where the active players (i.e. the player to commits in that given round and the corresponding player for Bob) alternate between locations 1 and 2, separated by a distance $D$. The $m$-round bit commitment protocol goes as follows 
\begin{enumerate}
	\item \emph{Preparation phase}: $\mathcal{A}_1,\mathcal{A}_2$ (resp.~$\mathcal{B}_1,\mathcal{B}_2$) share $m$ random numbers $a_1,\dots,a_{m}$ (resp.~$x_1,\dots,x_{m}$) $\in \FQ$. 
	\item \emph{Commit phase}: $\mathcal{B}_{1}$ sends $x_{1}$ to $\mathcal{A}_{1}$, who returns $y_{1} = d \cdot x_{1} + a_{1}$, where $d \in \{0, 1\}$ is the committed bit.
	\item \emph{Sustain phase}: for each round $k$, with $2 \le k < m$,  $\mathcal{B}_{k\mod 2}$ sends $x_{k}$ to $\mathcal{A}_{k\mod 2}$, who returns $y_{k} = x_{k}\cdot a_{k-1} + a_{k}$.
	\item \emph{Reveal phase}: $\mathcal{A}_1$ reveals $d$ and $y_m = a_{m-1}$ to $\mathcal{B}_1$. Bob checks that $y_m = \alpha_{m-1}$, where we recursively define $\alpha_0 := d$, $\alpha_i := y_i - b_i * \alpha_{i-1}$. $\alpha_i$ corresponds to what $a_i$ ``should be''. 
\end{enumerate}

The main idea of the multi-round protocol is to delay the reveal phase in order to increase the commitment time. This delay is obtained by making the passive Alice commit to the value of the string she was supposed to reveal in the previous round. Since each round increases the total commitment time by $D/c$, modulo the time needed for the various algebraic manipulations in $\FQ$, one sees that the required number of rounds scales linearly with the commitment time one wishes to achieve.

We require that round $j$ finishes before any information about $x_{j-1}$ reaches the other Alice. For any $j$, we therefore have the following : active Alice has no information about $x_{j-1}$. This means that $y_j$ is independent of $x_{j-1}$. This will be crucial in order to show security of the protocol. One important thing to notice is that $d$, the bit Alice wants to reveal can be decided just after the commit phase. Therefore, $y_1$ is independent of $d$ but all the other messages $y_2,\dots,y_m$ can depend on $d$. 

Both those protocol are perfectly hiding. Moreover, from Theorem \ref{Theorem:UpperBound}, the multi-round protocol is $\eps$ sum-binding, with $\eps = O(\frac{m}{\sqrt{Q}})$.

\subsection{The $\CHSH_Q$ game}
A crucial tool for our attack (and for the previous security analysis), is the $\CHSH_{Q}$ two-player game introduced by Buhrman and Massar \cite{BM05}. This game is a natural generalization of the CHSH game to the field $\F_Q$, where two cooperating but non-communicating parties, Alice and Bob, are respectively given an input $x$ and $y$ chosen uniformly at random from $\F_Q$, and must output two numbers $a, b \in \F_Q$. They win the game whenever the condition $a + b = x \cdot y$ is satisfied. The value of a game $G$, denoted $\omega(G)$,  corresponds to the maximum probability of winning the game. A recent result by Bravarian and Shor \cite{BS15} establishes  bounds on $\omega(\mathrm{CHSH}_Q)$. They show the following

\begin{proposition}
for any prime $p$ and integer $n$, we have 
$$ \omega(\CHSH_Q) = \left\{
\begin{array}{ll}
\OOmega(\sqrt{\frac{1}{Q}}) \quad & \textrm{if } Q = p^{2n} \\
\OO(Q^{- \frac{1}{2} - \eps_0}) \quad & \textrm{if } Q = p^{2n + 1} \\
\end{array}
\right. $$
for some absolute constant $\eps_0>0$.
\end{proposition}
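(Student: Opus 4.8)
The plan is to reduce the computation of $\omega(\CHSHQ)$ to a point--line incidence problem over the affine plane $\F_Q^2$, and then to invoke incidence/exponential-sum estimates whose strength depends on whether $\F_Q$ contains a subfield of order $\sqrt Q$. Since the classical value is attained on a deterministic strategy (shared randomness only convexifies), a strategy is a pair of functions $f,g:\F_Q\to\F_Q$ with $a=f(x)$, $b=g(y)$, so that
\[
\omega(\CHSHQ)=\frac{1}{Q^2}\max_{f,g}\,\#\{(x,y)\in\F_Q^2 : f(x)+g(y)=x\cdot y\}.
\]
Reading $f(x)+g(y)=xy$ as $f(x)=y\cdot x-g(y)$, the quantity being maximized is exactly the number of incidences $I(P,L)$ between the $Q$-point graph $P=\{(x,f(x))\}$ (one point in each vertical column) and the $Q$ non-vertical lines $L=\{z=yx-g(y)\}$ (one line of each slope $y$). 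Thus $\omega(\CHSHQ)Q^2=\max I(P,L)$ over these constrained configurations, and everything reduces to how large such incidence counts can be.

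For the upper bounds I would apply a finite-field point--line incidence inequality, e.g. the spectral (Vinh-type) bound coming from the second eigenvalue of the incidence graph of the affine plane, which gives $I(P,L)\le \frac{|P|\,|L|}{Q}+\sqrt{Q}\,\sqrt{|P|\,|L|}=Q+Q^{3/2}$ for $|P|=|L|=Q$. Dividing by $Q^2$ yields $\omega(\CHSHQ)\le \frac1Q+\frac{1}{\sqrt Q}=\OO(Q^{-1/2})$, valid for every $Q$ and in particular giving the stated bound for even powers. To sharpen this to $\OO(Q^{-1/2-\eps_0})$ when $Q=p^{2n+1}$ I would use that the extremal incidence configurations are forced to concentrate on subfield-cosets, which cannot happen to the full extent when $\sqrt Q\notin\mathbb{N}$; invoking the sum--product / improved incidence estimates over $\F_Q$ (Bourgain--Katz--Tao and its quantitative refinements) replaces the error term $Q^{3/2}$ by $Q^{3/2-\eps_0}$ for an absolute $\eps_0>0$ fixed by the sum--product exponent.

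For the lower bound in the even case $Q=p^{2n}=q^2$, I would exploit the subfield $\F_q\subset\F_Q$ together with the norm form and conjugation. Writing $\F_Q\cong\F_q^2$ and expanding the multiplication $x\cdot y$ into its two $\F_q$-components, the single field equation $f(x)+g(y)=xy$ becomes a pair of $\F_q$-bilinear conditions; one of them (a nondegenerate pairing on $\F_q^2$) is satisfied on a set of $\approx q^3$ pairs by the all-zero choice, and the task is to choose $f,g$ so that the second condition also holds on a constant fraction of that set. Using that on each of the relevant lines the second form is $\F_q$-linear with coefficient governed by the norm $N(x)=x\cdot x^q$, one can align $f$ and $g$ coherently and retain $\OOmega(q^3)=\OOmega(Q^{3/2})$ simultaneous solutions, giving $\omega(\CHSHQ)=\OOmega(Q^{3/2}/Q^2)=\OOmega(1/\sqrt Q)$.

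The main obstacles are exactly these two algebraic steps. On the lower-bound side, the delicate point is the simultaneous consistency of the two component equations under the rigid ``one point per column, one line per slope'' constraints: naive single-power choices of $g$ produce only $\Theta(Q)$ solutions, and it is the norm/conjugation structure of the quadratic subfield extension that lets a constant fraction of the $\approx q^3$ pairs survive. On the upper-bound side, the crux is quantifying the power saving $\eps_0$ for odd powers, which is not elementary and rests on nontrivial sum--product estimates in $\F_Q$ rather than on the spectral bound alone. Both ingredients, and the precise extremal constructions, are due to Bavarian and Shor \cite{BS15}.
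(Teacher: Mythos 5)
First, a point of reference: the paper never proves this proposition at all --- it is imported verbatim from Bavarian and Shor \cite{BS15}, the ``proof'' being the citation --- so your proposal can only be judged as a reconstruction of that external argument. The upper-bound half of your reconstruction is sound in outline: restricting to deterministic strategies, recasting $\omega(\CHSH_Q)\,Q^2$ as a point--line incidence count with one point per column and one line per slope, and applying the spectral (Vinh) bound $I(P,L)\le |P||L|/Q+\sqrt{Q|P||L|}=Q+Q^{3/2}$ to get $\OO(Q^{-1/2})$ are all correct; and you correctly locate the extra saving $Q^{-\eps_0}$ for odd prime powers in Bourgain--Katz--Tao-type incidence/sum--product estimates, together with the absence of a subfield of size near $\sqrt{Q}$ in $\F_{p^{2n+1}}$. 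That is indeed where the real work of \cite{BS15} lies, and deferring to it there is no worse than what the paper itself does.

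The lower bound for $Q=q^2$, however, contains a genuine gap, precisely at the step you yourself flag as delicate. Your mechanism is: zero out the first output components, so that the first bilinear condition holds on the $\approx q^3$ pairs with $B_1(x,y)=0$, then choose $f_2,g_2$ ``aligned via the norm'' so that the second condition holds on a constant fraction of that set. This provably cannot reach $\OOmega(Q^{-1/2})$. Taking $B_1$ to be the non-subfield component of $xy$, the set $\{B_1(x,y)=0\}$ is $\{(x,y): y\in x^{-1}\F_q\}$ (plus $x=0$), and on it the second condition reads $f_2(x)+g_2(tx^{-1})=t$, where $y=tx^{-1}$. All $x$ in a fixed $\F_q$-direction, $x=s\delta$ with $s\in\F_q^*$, share the \emph{same} line $\delta^{-1}\F_q$ of admissible $y$'s, and writing $y=t\delta^{-1}$ the condition becomes $f_2(s\delta)+g_2(t\delta^{-1})=st$: a copy of $\CHSH$ over the subfield $\F_q$. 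Hence each of the $q+1$ directions contributes at most $\omega(\CHSH_q)\cdot q^2$ winning pairs, for a total of $\OO\bigl(q^3\,\omega(\CHSH_q)\bigr)$ and a winning probability $\OO\bigl(\omega(\CHSH_q)/\sqrt{Q}\bigr)=\OO(Q^{-3/4})$; the subfield CHSH obstruction reappears, and no norm- or conjugation-based choice of $f_2,g_2$ evades it within this scheme. What a correct construction must produce is a graph (one point per column) containing $\approx q$-point collinear subsets in $\approx q^2$ pairwise \emph{distinct} directions, and subfield-aligned objects fail this: the grid $\F_q\times\F_q$ and the Frobenius graph $\{(x,x^q)\}$ both have their rich lines concentrated in only $\approx q$ directions. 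A configuration that does work is the affine part of a Baer subplane of $PG(2,Q)$ \emph{tangent} to the line at infinity: any two of its $q^2$ affine secant lines meet inside the subplane (two lines of a projective plane always meet), so their slopes are pairwise distinct, while all but one column carries exactly one of its points; trimming the exceptional column defines $f$, the secant intercepts define $g$, and one obtains $\approx q^3$ wins, i.e.\ $\OOmega(Q^{-1/2})$. A construction of this genuinely different kind is what \cite{BS15} supply; without it, your sketch does not establish the even-power case.
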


We define a variant of the $\CHSH_Q$ game, that we call $\CHSH_Q^\gamma$, which will be well defined for any $\gamma \in [0,1]$. We will use this variant in Section \ref{Section:AttackExtension}, when we will have longer propagation and decision times.
\begin{definition}
	In $CHSH_Q^\gamma$, Alice receives $x = 0$ with probability $\gamma$ and a random element $x \in \F_Q^* $, each with probability $\frac{1 - \gamma}{Q-1}$. Bob receives an input $y$ according to the same probability distribution. They output respectively $a$ and $b$ in $\F_Q$ and they win the game iff. $a + b = x\cdot y$.
\end{definition}
In short, $\CHSH_Q^\gamma$ is the same game as $\CHSH_Q$, but the input distribution of each player is slightly biased towards $0$. We have by definition $\CHSH_Q^{\frac{1}{Q}} = \CHSH_Q$. When playing $\CHSH_Q^\gamma$, we have:
	\begin{itemize}
		\item The probability that Alice and Bob get $(0,0)$ is $\gamma^2$.
		\item The probability that they get an element $(0,i)$ or $(i,0)$ with $i \in \F_Q^*$ is equal to $\frac{\gamma(1-\gamma)}{Q-1}$ for each such element.
		\item The probability that they get an element $(i,j)$ with $i,j \in \F_Q^*$ is equal to $\frac{(1-\gamma)^2}{(Q-1)^2}$ for each such element.
	\end{itemize}

	Inspired by shift techniques used in \cite{BS15}, we can show:
\begin{lemma}\label{Lemma:CHSH_Q^delta}
For any $\gamma \in [0,1]$, $\omega(CHSH_Q^\gamma) \ge \omega(\CHSH_Q)$.
\end{lemma}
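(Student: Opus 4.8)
The plan is to exploit the translation symmetry of the winning relation $a+b = x\cdot y$, together with an averaging argument over all shifts of the inputs, which is exactly the kind of shift technique alluded to from \cite{BS15}.

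First I would pass to an optimal \emph{deterministic} strategy for $\CHSH_Q$: since the winning probability is a linear (hence affine) functional of the strategy, by convexity the classical value $\omega(\CHSH_Q)$ is attained by a pair of functions $a,b\colon \F_Q \to \F_Q$. I would record its winning set $W := \{(x,y)\in\F_Q^2 : a(x)+b(y) = xy\}$, so that $\omega(\CHSH_Q) = |W|/Q^2$ because the inputs in $\CHSH_Q$ are uniform on $\F_Q^2$.

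Next I would show that for every shift $(s,t)\in\F_Q^2$ there is a legal strategy whose winning set is exactly the translate $W+(s,t) := \{(x+s,y+t):(x,y)\in W\}$. The key observation is that $(x-s)(y-t) = xy - tx - sy + st$, so the bilinear cross term picks up only a part depending on $x$ alone and a part depending on $y$ alone (plus a constant), each of which a single player can compute from its own input. Concretely I would let Alice on input $x$ output $a'(x) := a(x-s) + tx$ and Bob on input $y$ output $b'(y) := b(y-t) + sy - st$; a direct substitution gives $a'(x)+b'(y) = xy$ if and only if $a(x-s)+b(y-t) = (x-s)(y-t)$, i.e.\ if and only if $(x-s,y-t)\in W$. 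Hence the new winning set is precisely $W+(s,t)$.

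I would then average the $\CHSH_Q^\gamma$-value of these $Q^2$ shifted strategies. Writing $p$ for the biased single-player distribution ($p(0)=\gamma$ and $p(z)=\tfrac{1-\gamma}{Q-1}$ for $z\in\F_Q^*$), the value of the translate $W+(s,t)$ equals $\sum_{(x,y)\in W} p(x+s)\,p(y+t)$. Averaging over all $(s,t)$ and using that $s\mapsto x+s$ is a bijection of $\F_Q$ with $\sum_{u\in\F_Q} p(u)=1$, so $\tfrac1Q\sum_{s} p(x+s) = \tfrac1Q$, the average collapses to $\sum_{(x,y)\in W}\tfrac1{Q^2} = |W|/Q^2 = \omega(\CHSH_Q)$. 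By a pigeonhole argument at least one shift $(s,t)$ attains at least this average, and the corresponding shifted strategy is a legal strategy for $\CHSH_Q^\gamma$, yielding $\omega(\CHSH_Q^\gamma)\ge \omega(\CHSH_Q)$.

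The main obstacle I anticipate is the second step, namely verifying that a shift of the inputs can be \emph{locally} compensated so that it maps strategies to strategies and translates the winning set; everything hinges on the bilinear term separating into single-variable pieces. Once that is established, the averaging step is routine, since uniform averaging over shifts washes out the input bias $\gamma$ and recovers the uniform value.
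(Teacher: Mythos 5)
Your proof is correct and takes essentially the same approach as the paper: both pass to an optimal deterministic strategy, use the bilinear shift trick (Alice outputs $s_1(x+u)-vx$, Bob outputs $s_2(y+v)-uy-uv$, which translates the winning set) and then average over all $Q^2$ shifts to find one whose value under the biased distribution is at least $\omega(\CHSH_Q)$. The only cosmetic difference is the order of steps — the paper averages the value of the \emph{fixed} strategy over shifted input distributions and then shifts the strategy once, while you shift the strategy for every $(s,t)$ and average those values over the fixed biased distribution — but these are the same quantities, as the paper itself notes via the identity $Z'_{x,y}=Z_{x+u,y+v}$.
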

\begin{proof}
As randomized strategies are nothing more than linear combinations of deterministic strategies, of which winning probability is given by the same linear combination, we can assume that all used optimal strategies are deterministic without loss of generality.

We consider an optimal strategy $S = (s_{1}, s_{2})$ for the $\CHSH$ game i.e. function $s_1,s_2 : \F_Q \rightarrow \F_Q$ such that $\Pr_{x, y}[s_{1}(x)+s_{2}(y)=xy] = \omega(\CHSHQ)$, where the probability is over uniform $x$ and $y$. We define $p_{x,y} := 1$ if $s_{1}(x)+s_{2}(y)=xy$ and $0$ otherwise, which implies $\E_{xy} p_{xy} = \omega(\CHSH_Q)$. Let
	$$ Z_{u,v} := \gamma^2 p_{u,v} + \frac{\gamma(1-\gamma)}{Q - 1} \left(
	\sum_{x \in \F_Q - \{u\}} p_{xv} + \sum_{y \in \F_Q - \{v\}} p_{uy}\right) + \frac{(1-\gamma)^2}{(Q-1)^2} \sum_{\substack{x \in \F_Q - \{u\} \\ y \in \F_Q - \{v\}}} p_{xy}.$$
	
	$Z_{u,v}$ corresponds to the probability of winning the game $\CHSHQ$ on a changed probability distribution. In particular, $Z_{0,0}$ corresponds to the probability of winning $\CHSH_Q^\gamma$ when using strategy $S$.
	One can check that $\E_{u,v} [Z_{u,v}] = \omega({\CHSHQ})$, so we can fix a pair $(u,v)$ such that $Z_{u,v} \ge \omega(\CHSHQ)$.
	
	We now consider the strategy $S' = (s'_{1}, s'_{2})$ where $s'_{1}(x) = s_{1}(x+u) - xw$ and $s'_{2}(y) = s_{2}(y+v) - yu - uv$. $S'$ wins for $(x, y)$ precisely when $S$ wins for $(x+u, y+v)$. Indeed :
	$$\begin{array}{lll}
	s'_{1}(x)+s'_{2}(y) = xy & \Leftrightarrow & s_{1}(x+u) - xv + s_{2}(y+u) - yu - uv = xy\\
	& \Leftrightarrow & s_{1}(x+u) + s_{2}(y+v) = (x+u)(y+v)
	\end{array} $$
	
	Similarly, as before, we define $p'_{xy} = 1$ if $s'_1(x) + s'_2(y) = x\cdot y$ and $0$ otherwise. From the above equivalence, we have $p'_{x,y} = p_{(x+u),(y+v)}$. We also define 
	$$ Z'_{u,v} := \gamma^2 p'_{u,v} + \frac{\gamma(1-\gamma)}{Q - 1} \left(
	\sum_{x \in \F_Q - \{u\}} p'_{xv} + \sum_{y \in \F_Q - \{v\}} p'_{uy}\right) + \frac{(1-\gamma)^2}{(Q-1)^2} \sum_{\substack{x \in \F_Q - \{u\} \\ y \in \F_Q - \{v\}}} p'_{xy}.$$
	Notice that $Z'_{0,0}$ corresponds to the probability of winning $\CHSH_Q^\gamma$ when using strategy $S'$. Moreover, for any $(x,y)$, we have $Z'_{x,y} = Z_{x+u,y+v}$. From there, we conclude 
	$$ \omega(\CHSH_Q^\gamma) \ge Z'_{0,0} = Z_{u,v} \ge \omega(\CHSH_Q), $$
	which proves the desired result.
\end{proof}
\section{Attack with perfect conditions}\label{Section:TheAttack}

In this Section, we present our construction of a cheating strategy which will be essentially optimal for some values of $Q$. The protocol is perfectly hiding. Therefore, we are only interested in the binding property, \ie in cheating Alice.

The idea of the attack is the following. Every three rounds (or more in Section \ref{Section:AttackExtension}), Alice's agents have an occasion to play a $\CHSH_Q$ game. If they win this game, which happens with probability $\omega(\CHSH_Q)$, they can easily fool Bob (with the provided strategy, sending only zeros until \emph{reveal phase} is fine). If they do not manage to win the $\CHSH_Q$ game, they just try again three rounds later. More precisely, for each step of three rounds, the last two rounds are used to play the $\CHSH_Q$ game. The first one allows $\mathcal{A}_1$ and $\mathcal{A}_2$ to determine if they won during the previous step, or calculate a corrective factor $\eta$ if they did not. Thus, for a m-round long protocol, Alice's agent can play roughly $\frac{m}{3}$ such $\CHSH_Q$ games. As it is sufficient for them to win one of these games, we see that cheating probability grows exponentially with the number of rounds. Moreover, at each of these sets of three rounds, an additional factor $(1-\frac{1}{Q})$ appears. Indeed, if at the third round of the set Bob sends a $0$, Alice is also in a situation in which she can cheat, because this $0$ makes Alice's error collapse to zero. However, the contribution of this additional factor can be neglected (it is only $\OO(\frac{1}{Q})$, compared to the $\OO(\frac{1}{\sqrt{Q}})$ given by $\CHSH_Q$).

We assume for now that the propagation time of the information is $2$ rounds. This means that when $\Alice_{(i \mod 2)}$ receives $x_i$, the other Alice will know the value of $x_i$ at round $i+2$. Therefore, a cheating strategy for Alice is described by a $m$-tuple of functions $S = (s_1,\dots,s_m)$, where each $s_i$ corresponds to Alice's output function at round $i$. $s_1$ is a function of $x_1$ and $s_i$ is a function of $(x_0,\dots,x_{i-2},x_i)$ for $i \ge 1$ where we use the convention $x_0 = d$. For each $i \ge 1, x_i \in \F_Q$ and the output space of each $s_i$ is $\F_Q$.

Consider any fixed cheating strategy $S$ for Alice. At the end of the protocol, Bob checks that $y_m = \alpha_{m-1}$. When we expand $\alpha_{m-1}$ as a function of  $(d,x_1,\dots,x_{m-1})$, the checking condition, that we call $\mathcal{C}_m$ becomes  

$$y_{m} = y_{m-1} - x_{m-1}\Bigg(y_{m-2}-x_{m-2}\bigg(...\;...-x_{2}(y_{1}-d\cdot x_{1})...\bigg)\Bigg).$$

If this equality is not verified, Alice is caught cheating. On the other hand, if $\mathcal{C}_m$ is verified then Bob cannot distinguish an honest Alice from a dishonest one, and he does not abort. 

Let $\mathcal{C}_m(S, d,x_1,\dots,{x_{m-1}})$ the event which corresponds to the above equality being verified. Alice's cheating probability using $S$, that we note $g_m(S)$ is therefore
$$ g_m(S) := \Pr_{d,x_1,\dots,x_{m-1}}[\mathcal{C}_m(S, d,x_1,\dots,x_{m-1})]$$  
where $d$ is a uniformly random bit and $x_1,\dots,x_{m-1}$ are uniformly random elements of $\F_Q$. We also define $g_m := \max_{S} g_m(S)$ which is Alice's maximal cheating probability in $\mathcal{P}_m$. In this section, we present a cheating strategy $S$ for Alice such that 
$$g_m(S) = \frac{1}{2} + \frac{1}{2}\left(1 - \left(1 - \omega(\CHSH_Q)\right)^{\lfloor\frac{m - 1}{3}\rfloor}\right). $$
which will prove Theorem \ref{Theorem:Main}. In order to do so, we first modify protocol $\mathcal{P}_m$ to make it more symmetric (Section \ref{Section:Symmetrization}). Then, we describe our attack (Section \ref{Section:DescriptionOfTheAttack}) and we prove its cheating probability (Section \ref{Section:MainProof}).

\subsection{Symmetrization of the protocol}\label{Section:Symmetrization}

We want to describe a recursive strategy for protocol $\mathcal{P}_{m}$.
Unfortunately, this protocol induces a difference between Alice's strategy at round $1 \leq k < m$ and her strategy at round $m$. Because of that, it is difficult to study the protocol recursively. 

We therefore consider a modified protocol $\mathcal{P'}_{m}$, which, as we will show, is a bit easier than $\mathcal{P}_{m}$ to win, but harder than $\mathcal{P}_{m+1}$. In this modified version, at round $m$, $\Bob_{m \mod 2}$ sends an additional random string $x_{m} \in \F_Q$, and $\Alice_{m \mod 2}$ returns $y_{m} = x_{m}\cdot a_{m-1}$ instead of $y_{m} = a_{m-1}$. All other rounds are unchanged. Similarly, as for $\mathcal{P}_m$, a cheating strategy for Alice $S'$ can be described as a $m$-tuple of functions $(s'_1,\dots,s'_m)$ that give Alice's outputs $y_i$ depending on her accessible information at round $i$. 

Bob checks now that $y_m = x_m \cdot \alpha_{m-1}$ and therefore, the condition Alice must satisfied to win is modified into $\mathcal{C'}_m(S', d,x_1,\dots,x_m)$, where 
$$\mathcal{C'}_m(S', d,x_1,\dots,x_m) \Leftrightarrow y_{m} = x_{m}\Bigg(y_{m-1} - x_{m-1}\bigg(y_{m-2}-x_{m-2}\Big(...\;...-x_{2}(y_{1}-d\cdot x_{1})...\Big)\bigg)\Bigg)$$
By expanding $\mathcal{C'}_m(S', d,x_1,\dots,x_m)$, it can be written down as :
$$\begin{array}{llll}
y_{m}&=&&x_{m}\cdot y_{m-1}\\
&&-&x_{m}\cdot x_{m-1}\cdot y_{m-2}\\
&&+&x_{m}\cdot x_{m-1}\cdot x_{m-2}\cdot y_{m-3}\\
&&\vdots&\vdots\\
&&-&(-1)^{m}x_{m}\cdot x_{m-1}\cdot x_{m-2}\cdot ...\cdot x_{1} \cdot d
\end{array} $$
or, using a compact form:
$$\mathcal{C'}_m(S', d,x_1,\dots,x_m) \Leftrightarrow y_{m} = \sum\limits_{i=1}^{m-1} \left( (-1)^{m-i}y_{i}\cdot\prod\limits_{j=i+1}^{m}x_{i} \right) - (-1)^{m}d\cdot\prod\limits_{j=1}^{m}x_{j}$$
For a cheating strategy $S'$, Alice's winning probability $g'_m(S')$ for this modified protocol is therefore defined as  
$$ g'_{m}(S') := \Pr_{d,x_1,\dots,x_m}[\mathcal{C'}_m(S', d,x_1,\dots,x_m)] \quad \textrm{and} \quad g'_m := \max_{S'} g'_m(S')$$

We show the following

\begin{lemma}\label{Lemma:Symmetrization}
	$\forall m \geq 2$, we have $g_{m} \leq g'_{m} \leq g_{m+1}.$
\end{lemma}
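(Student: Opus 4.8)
The plan is to prove the two inequalities $g_m \le g'_m$ and $g'_m \le g_{m+1}$ separately, each by an explicit strategy transformation that preserves the causal (information) structure of the active Alice at every round. Throughout I will use the recursion $\alpha_0 = d$, $\alpha_i = y_i - x_i \alpha_{i-1}$, so that the winning condition of $\mathcal{P}_n$ reads $\mathcal{C}_n : y_n = \alpha_{n-1}$ and that of $\mathcal{P}'_m$ reads $\mathcal{C}'_m : y_m = x_m \alpha_{m-1}$. The point of the symmetrization is precisely that these two conditions are algebraically almost identical, differing only by a single multiplication by $x_m$, which makes both transformations short.

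For the left inequality $g_m \le g'_m$, I would start from an optimal strategy $S=(s_1,\dots,s_m)$ for $\mathcal{P}_m$ and lift it to a strategy $S'=(s'_1,\dots,s'_m)$ for $\mathcal{P}'_m$ by keeping $s'_i = s_i$ for $i \le m-1$ and setting $s'_m(x_0,\dots,x_{m-2},x_m) := x_m\cdot s_m(x_0,\dots,x_{m-2})$. This is legitimate because the active Alice in $\mathcal{P}'_m$ receives $x_m$ at the last round, so multiplying by $x_m$ uses only information she now holds. Since rounds $1,\dots,m-1$ are unchanged, the value $\alpha_{m-1}$ (and each $y_i$, $i\le m-1$) is the same function of $(d,x_1,\dots,x_{m-1})$ in both protocols. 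Hence $\mathcal{C}'_m$ becomes $x_m\, y_m = x_m\,\alpha_{m-1}$, which holds exactly when $\mathcal{C}_m$ holds or $x_m = 0$. As $x_m$ is uniform and independent of $(d,x_1,\dots,x_{m-1})$ and of $y_m=s_m(\cdot)$, this gives $g'_m(S') = g_m(S) + (1-g_m(S))\tfrac1Q \ge g_m(S)$, and maximizing over $S$ yields $g'_m \ge g_m$.

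For the right inequality $g'_m \le g_{m+1}$, I would go the other way: from any strategy $S'=(s'_1,\dots,s'_m)$ for $\mathcal{P}'_m$ build a strategy $S=(s_1,\dots,s_{m+1})$ for $\mathcal{P}_{m+1}$ by copying the first $m$ rounds, $s_i := s'_i$ for $i\le m$ (the information available at rounds $1,\dots,m$ is identical in the two protocols), and appending the reveal output $y_{m+1} \equiv 0$, which trivially depends only on $(x_0,\dots,x_{m-1})$. Because rounds $1,\dots,m$ coincide, $\alpha_{m-1}$ and $y_m$ are identical functions in both settings, so the check $\mathcal{C}_{m+1}:\, y_{m+1}=\alpha_m = y_m - x_m\alpha_{m-1}$ collapses, with $y_{m+1}=0$, to $y_m = x_m\alpha_{m-1}$, i.e.\ exactly $\mathcal{C}'_m$. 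Since both winning probabilities are taken over the same uniform distribution of $(d,x_1,\dots,x_m)$, we get $g_{m+1}(S) = g'_m(S')$, and maximizing over $S'$ gives $g_{m+1} \ge g'_m$.

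The routine algebra is trivial; the step that deserves genuine care, and which I expect to be the only real obstacle, is checking that both transformations respect causality, namely that every constructed output function depends only on the inputs the active Alice legitimately knows at that round (that $s'_m$ may multiply by the freshly received $x_m$, and that the appended reveal output is a constant and so uses no forbidden information). A secondary point to verify carefully is that the $\alpha$-recursion, including the alternating signs, makes $\mathcal{C}_{m+1}$ reduce to $\mathcal{C}'_m$ precisely when $y_{m+1}=0$; once the identity $\alpha_m = y_m - x_m\alpha_{m-1}$ is used this is immediate, but it is worth stating explicitly so that the claimed equivalence of events is unambiguous.
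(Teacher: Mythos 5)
Your proof is correct and takes essentially the same approach as the paper: both inequalities are obtained by exactly the same two strategy transformations (keeping the first $m-1$ rounds and setting $s'_m := x_m \cdot s_m$ for one direction; copying all $m$ rounds and appending the constant-$0$ output $\overline{0}$ for the other), with the same causality observations. The only cosmetic differences are that you work with the $\alpha$-recursion instead of the paper's nested expansion, and that you record the sharper facts $g'_m(S') = g_m(S) + (1-g_m(S))\frac{1}{Q}$ and $g_{m+1}(S) = g'_m(S')$, where the paper contents itself with the one-sided implications needed for the inequalities.
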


\begin{proof}~
	\begin{itemize}
		\item For the first inequality, let us consider the optimal strategy $S = (s_{1}, ..., s_{m})$ for $\mathcal{P}_{m}$, where $s_{k}$ is Alice's strategy at round $k$ (i.e. a function that outputs $y_{k}$ when given Alice's knowledge at round $k$). Alice's cheating probability for $\mathcal{P}_m$ is $g_m(S)$. Consider the following strategy $S' := (s_{1}, ..., s_{m-1}, s'_{m})$ for $\mathcal{P}'_m$, where $s'_{m}(d,x_1,\dots,x_{m-2},x_m) := x_{m} \cdot s_{m}(d, x_1,\dots,x_{m-2})$. $S'$ allows to win on $\mathcal{P'}_{m}$ at least as efficiently as $S$ on $\mathcal{P}_{m}$, because $S'$ wins whenever $S$ does. Indeed, suppose that $S$ is a winning strategy for a given $(d, x_{1}, ..., x_{m-1})$. This means that $\mathcal{C}_m(S, d,x_1,\dots,x_{m-1})$ is satisfied or equivalently:
		$$s_{m}(d,x_1,\dots,x_{m-2}) = y_{m-1} - x_{m-1}\Big(y_{m-2}-...\;...-x_{2}(y_{1}-d\cdot x_{1})...\Big)$$
		Then, since $s'_{m}(d, x_1, \dots, x_{m-2},x_m)  = x_{m} \cdot s_{m}(d,x_1,\dots,x_{m-2})$, we get
		$$s'_{m}(d, x_1, \dots, x_{m-2},x_m) = x_{m}\bigg(y_{m-1} - x_{m-1}\bigg(y_{m-2}-...\;...-x_{2}(y_{1}-d\cdot x_{1})...\Big)\bigg)$$
		which implies $\mathcal{C}'_m(S', d,x_1,\dots,x_m)$, for any $x_m$. From there, we immediately get 
		$$g_m = g_m(S) \le g'_m(S') \le g'_m.$$
\item For the other inequality, we fix an optimal strategy $S' = (s_{1}, ..., s_{m})$ for $\mathcal{P'}_{m}$. We consider the following strategy $S := (s_{1}, ..., s_{m}, \overline{0})$ for $\mathcal{P}_{m+1}$, where $\overline{0}$ is the function that always outputs $0$, no matter the inputs. This means that when performing $S$, we always have $y_{m+1} = 0$. $S$ is at least as good to win $\mathcal{P}_{m+1}$ as $S'$ is to win $\mathcal{P'}_{m}$. Indeed, if for a tuple $(d, x_{1}, ..., x_{m})$, $S'$ wins on $\mathcal{P'}_{m}$, then $\mathcal{C'}(S', d,x_1,\dots,x_m)$ holds or equivalently
		$$y_{m} = x_{m}\Bigg(y_{m-1} - x_{m-1}\bigg(y_{m-2}-x_{m-2}\Big(...\;...-x_{2}(y_{1}-d\cdot x_{1})...\Big)\bigg)\Bigg)$$
		From there, we immediately have 
		$$y_{m+1} = 0 = y_{m} - x_{m}\Bigg(y_{m-1} - x_{m-1}\bigg(y_{m-2}-x_{m-2}\Big(...\;...-x_{2}(y_{1}-d\cdot x_{1})...\Big)\bigg)\Bigg)$$
		which implies $\mathcal{C}_{m+1}(S, d,x_1,\dots,x_m)$. From there, we immediately get
		$$ g'_m = g'_m(S') \le g_{m+1}(S) \le g_{m+1}.$$
	\end{itemize}
\end{proof}
The above lemma shows in particular how to transform a strategy for $\mathcal{P}'_m$ into a strategy for $\mathcal{P}_{m+1}$ with at least as good cheating probability. This means that we can study $\mathcal{P}'_{m}$ instead of $\mathcal{P}_{m+1}$. The first inequality shows that we do not lose much doing so. 

We also make another change. In order to simplify calculations, we ask Alice to answer at each round $i$ $\tilde{y}_{i} := (-1)^{i+1}y_{i}$ instead of $y_i$. For the protocol, it is totally equivalent to use $\tilde{y}_{i}$ or $y_{i}$ but it allows to avoid all $(-1)$ factors. With this notation, Alice's victory condition $\mathcal{C}'_m(S', d,x_1,\dots,x_m)$ for the protocol becomes:
$$\sum\limits_{i=1}^{m} \left( \tilde{y_{i}}\prod\limits_{j=i+1}^{m}x_{j} \right) = d\prod\limits_{j=1}^{m}x_{j}$$

In the next section, we present a cheating strategy for protocol $\mathcal{P}'_m$.

\subsection{Description of the attack}\label{Section:DescriptionOfTheAttack}
In the previous section, we transformed protocol $\mathcal{P}$ into a slightly modified protocol $\mathcal{P}'$, which has extra symmetries and for which it will be simpler to construct a recursive cheating strategy. In this section, we describe this strategy for $\mathcal{P}'$.

More precisely, we define recursively a strategy with a step of three rounds. To initialize, we consider the following strategy for $\mathcal{P}'_3$: 		
\begin{itemize}
	\item $\mathcal{A}_1$ always outputs $\tilde{y}_{1} = 0$.
	\item $\mathcal{A}_1$ and $\mathcal{A}_2$ perform the optimal strategy for the $\CHSH_Q$ game with inputs $x_1$ and $x_2$. Let $a$ and $b$ be their respective outputs.
	\item $\mathcal{A}_2$ outputs $\tilde{y}_{2} = d \cdot a$ for round $2$ and $\mathcal{A}_1$ outputs $\tilde{y}_{3} = x_{3} \cdot d \cdot b$ for round $3$.
\end{itemize} 
With this strategy, $\mathcal{C'}_{3}$ becomes $x_{3} \cdot d \cdot (a+b - x_{1}\cdot x_{2}) = 0$. Alice wins if $x_{3}=0$, if $d=0$, or if $a+b = x_{1}\cdot x_{2}$. These events are independent, which gives $g'_{3} \geq 1 - \frac{1}{2}(1-\frac{1}{Q})(1-\Cq)$. 

We now describe a strategy for $k+3$ rounds using a strategy for $k$ rounds. We fix a cheating strategy $S'_k$ for Alice for $\mathcal{P}'_k$ and we present a cheating strategy $S'_{k+3}$ for $\mathcal{P}'_{k+3}$. \\ \\
\encad{\begin{center}
		{\textbf{Recursive Description of a cheating strategy $S'_{k+3}$ given $S'_k$}} \end{center}
	\begin{itemize}
		\item Rounds $1$ to $k$: Alice performs the strategy $S'_k$ to get outputs $\tilde{y}_1,\dots,\tilde{y}_k$.
		\item Round $k+1$: Alice always outputs $\tilde{y}_{k+1} = 0$.
		\item Rounds $k+2$ and $k+3$: From round $k+2$, $\mathcal{A}_1$ and $\mathcal{A}_2$ both know $d,x_1,\dots,x_k$. Let 
		$$\eta := d \prod\limits_{j=1}^{k}x_{j} - \sum\limits_{i=1}^{k} ( \tilde{y_{i}}\prod\limits_{j=i+1}^{k}x_{j} ).$$ 
	$\mathcal{A}_{(k+2) \mod 2}$ and $\mathcal{A}_{(k+3) \mod 2}$ perform the optimal strategy for $\CHSH_Q$ on respective inputs $x_{k+2}$ and $x_{k+1}$ to get respective outputs $a$ and $b$. Their outputs of the protocol are respectively $\tilde{y}_{k+2} = \eta \cdot a$ and $\tilde{y}_{k+3} = \eta \cdot b \cdot x_{k+3}$. Notice that if $\eta = 0$, which will correspond to the strategy $S'_k$ succeeding to achieve $\mathcal{C}'_k$, Alice outputs $\tilde{y}_{k+2},\tilde{y}_{k+3} = 0$ independently of $a$ and $b$.
	\end{itemize}
} \\ \\

In the next section, we will prove the cheating probability achieved by this strategy, which will imply our main theorem.

\subsection{Analysis} \label{Section:MainProof}
\begin{lemma} \label{Lemma:Step}
	$\forall k \geq 2$, $g'_{k}$ satisfies :$$ 1 - g'_{k+3}(S'_{k+3}) \le ( 1-\dfrac{1}{Q})\left( 1-\omega(\CHSHQ) \right)(1 - g'_k(S'_k)).$$
\end{lemma}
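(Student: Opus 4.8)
The plan is to prove this recursively by directly analyzing the strategy $S'_{k+3}$ described in the previous section, tracking exactly when Alice wins and when she fails. The key observation is that the victory condition $\mathcal{C}'_{k+3}$ for the symmetrized protocol factors through the quantity $\eta$ defined in the recursive construction. So first I would write out $\mathcal{C}'_{k+3}(S'_{k+3}, d, x_1, \dots, x_{k+3})$ using the compact form
$$\sum_{i=1}^{k+3} \tilde{y}_i \prod_{j=i+1}^{k+3} x_j = d \prod_{j=1}^{k+3} x_j,$$
and substitute the strategy's outputs: $\tilde{y}_{k+1} = 0$, $\tilde{y}_{k+2} = \eta a$, and $\tilde{y}_{k+3} = \eta b x_{k+3}$. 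The terms from rounds $1$ through $k$ should collapse, by the definition of $\eta$, into $\eta \cdot x_{k+1}x_{k+2}x_{k+3}$ on the side of the equation coming from the first $k$ rounds together with the $d\prod x_j$ term.

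Next I would carry out the algebra to show that $\mathcal{C}'_{k+3}$ reduces to an equation of the form
$$\eta \cdot x_{k+3} \cdot \bigl(a + b - x_{k+1} x_{k+2}\bigr) = 0,$$
mirroring exactly the base case computation for $\mathcal{P}'_3$ (where the same structure $x_3 d(a+b-x_1x_2)=0$ appeared). This is the heart of the argument: the recursive construction is engineered precisely so that the ``residual error'' after $k$ rounds is the single field element $\eta$, and the two extra $\CHSH_Q$ rounds attempt to cancel it. I expect this substitution-and-collapse step to be the main obstacle, since one must be careful with the telescoping products $\prod_{j=i+1}^{k+3} x_j$ versus $\prod_{j=i+1}^{k} x_j$ and confirm that $\eta = 0$ is exactly the event that $S'_k$ achieves $\mathcal{C}'_k$.

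Once the condition is in the factored form, the winning analysis is immediate: Alice wins whenever $\eta = 0$, or $x_{k+3} = 0$, or $a + b = x_{k+1}x_{k+2}$. The event $\eta = 0$ has probability $g'_k(S'_k)$ by definition. The event $\{x_{k+3} = 0\}$ has probability $\frac{1}{Q}$ and is independent of the others since $x_{k+3}$ is fresh uniform randomness, and the event $\{a+b = x_{k+1}x_{k+2}\}$ is a $\CHSH_Q$ game on the fresh inputs $x_{k+1}, x_{k+2}$, winning with probability $\omega(\CHSHQ)$ under the optimal strategy. Conditioned on $\eta \neq 0$, the failure event is the intersection $\{x_{k+3} \neq 0\} \cap \{a+b \neq x_{k+1}x_{k+2}\}$, whose probability is at most $(1 - \frac{1}{Q})(1 - \omega(\CHSHQ))$ by independence of the fresh randomness. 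Therefore
$$1 - g'_{k+3}(S'_{k+3}) \le \bigl(1 - g'_k(S'_k)\bigr)\Bigl(1 - \tfrac{1}{Q}\Bigr)\bigl(1 - \omega(\CHSHQ)\bigr),$$
which is the claimed inequality. The only subtlety to verify carefully is that when $\eta \neq 0$ the three events are genuinely independent, which holds because $x_{k+1}, x_{k+2}, x_{k+3}$ are independent of the first $k$ rounds' randomness that determines $\eta$.
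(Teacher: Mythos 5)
Your proposal is correct and takes essentially the same approach as the paper's own proof: you expand $\mathcal{C}'_{k+3}$, substitute $\tilde{y}_{k+1}=0$, $\tilde{y}_{k+2}=\eta a$, $\tilde{y}_{k+3}=\eta b x_{k+3}$ so that the condition factors as $\eta\, x_{k+3}(a+b-x_{k+1}x_{k+2})=0$, and then use independence of the three events $(\eta=0)$, $(x_{k+3}=0)$, $(a+b=x_{k+1}x_{k+2})$ with probabilities $g'_k(S'_k)$, $\frac{1}{Q}$, $\omega(\CHSH_Q)$ to obtain the product bound, exactly as in the paper.
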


\begin{proof}
	We consider $\mathcal{P'}_{k+3}$. Alice's winning condition $\mathcal{C}'_{k+3}$ is:
	$$\sum\limits_{i=1}^{k+3} \left( \tilde{y_{i}}\prod\limits_{j=i+1}^{k+3}x_{j} \right) = d\prod\limits_{j=1}^{k+3}x_{j}$$
	or, by taking apart the last $3$ terms :
	$$\begin{array}{ll}
	&\tilde{y}_{k+3}\\
	+&x_{k+3}\cdot\tilde{y}_{k+2}\\
	+&x_{k+3}\cdot x_{k+2}\cdot\tilde{y}_{k+1}\\
	+&x_{k+3}\cdot x_{k+2}\cdot x_{k+1}\cdot\sum\limits_{i=1}^{k} \left( \tilde{y_{i}}\prod\limits_{j=i+1}^{k}x_{j} \right)\\
	=&x_{k+3}\cdot x_{k+2}\cdot x_{k+1}\cdot d \prod\limits_{j=1}^{k}x_{j}
	\end{array}$$
	
	Recall that $\eta := d \prod\limits_{j=1}^{k}x_{j} - \sum\limits_{i=1}^{k} ( \tilde{y_{i}}\prod\limits_{j=i+1}^{k}x_{j} ) \in \mathbb{F}_{q}$. Using $\eta$, we get :
	$$\mathcal{C'}_{k+3} \Leftrightarrow \tilde{y}_{k+3}
	+ x_{k+3}\cdot\tilde{y}_{k+2} + x_{k+3}\cdot x_{k+2}\cdot\tilde{y}_{k+1} = x_{k+3}\cdot x_{k+2}\cdot x_{k+1}\cdot \eta$$
Recall from our protocol description that $\tilde{y}_{k+2} = \eta \cdot a$ and $\tilde{y}_{k+3} = \eta \cdot b \cdot x_{k+3}$, where $a$ and $b$ are the Alice's outputs of the $\CHSH_Q$ game.
From there, we have
	$$\begin{array}{lll}
	\mathcal{C'}_{k+3} & \Leftrightarrow & \tilde{y}_{k+3}
	+ x_{k+3}\cdot\tilde{y}_{k+2} + x_{k+3}\cdot x_{k+2}\cdot\tilde{y}_{k+1} = x_{k+3}\cdot x_{k+2}\cdot x_{k+1}\cdot \eta\\
	& \Leftrightarrow & x_{k+3} \cdot b \cdot \eta
	+ x_{k+3}\cdot a\cdot \eta + 0 = x_{k+3}\cdot x_{k+2}\cdot x_{k+1}\cdot \eta\\
	& \Leftrightarrow & \eta\cdot x_{k+3}\cdot (a+b-x_{k+1}\cdot x_{k+2}) = 0\\
	& \Leftrightarrow & (x_{k+3} = 0) \vee (\eta = 0) \vee (a+b = x_{k+1} \cdot x_{k+2})
	\end{array} $$
	These $3$ events are independent as :
	\begin{itemize}
		\item $(x_{k+3} = 0)$ only depends on $x_{k+3}$, and happens with probability $\frac{1}{Q}$.
		\item $(\eta = 0)$ only depends on $d, x_{1}, ..., x_{k}$, and happens with probability $g'_k(S'_k)$.
		\item $(a+b = x_{k+1} \cdot x_{k+2})$ only depends on $x_{k+1}$ and $x_{k+2}$ ($\mathcal{A}_{1}$ and $\mathcal{A}_{2}$ optimally play the $\CHSH_Q$ game on inputs $x_{k+1},x_{k+2}$, ignoring any unnecessary information). This happens therefore with probability $\Cq$.
	\end{itemize}
	
	Thus, this particular strategy gives
	$$g'_{k+3}(S'_{k+3}) = \Pr[\mathcal{C'}_{k+3}] = 1-(1-g'_{k}(S'_k))(1-\frac{1}{Q})(1-\omega(\CHSHQ))$$
or equivalently $$ 1 - g'_{k+3}(S'_{k+3}) \le ( 1-\dfrac{1}{Q})\left( 1-\omega(\CHSHQ) \right)(1 - g'_k(S'_k)).$$
\end{proof}
We can now prove our main theorem 
\setcounter{theorem}{1}
\begin{theorem} 
			$\forall m\geq 3$, we have :
			$$g_{m} \geq 1 - \dfrac{1}{2}\left(\left( 1-\dfrac{1}{Q} \right)\left( 1-\omega(\CHSHQ) \right)\right)^{\lfloor \frac{m-1}{3} \rfloor}$$
\end{theorem}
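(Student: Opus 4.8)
The plan is to treat this statement as the wrap-up of the two preceding lemmas: essentially all of the substantive work is already done, and what remains is to iterate the recursion of Lemma~\ref{Lemma:Step} from the base case and then transfer the resulting bound on the symmetrized protocol $\mathcal{P}'$ back to $\mathcal{P}$. For brevity write $\lambda := \left(1-\frac{1}{Q}\right)\left(1-\omega(\CHSHQ)\right)$, so that the target right-hand side is $1-\frac{1}{2}\lambda^{\lfloor (m-1)/3\rfloor}$.

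First I would establish, by induction on $t\ge 1$, the estimate $1 - g'_{3t}(S'_{3t}) \le \frac{1}{2}\lambda^{t}$ for the recursively defined strategies of Section~\ref{Section:DescriptionOfTheAttack}. The base case $t=1$ is exactly the bound obtained from the explicit $\mathcal{P}'_3$ strategy, namely $1 - g'_3(S'_3) \le \frac{1}{2}\lambda$. For the inductive step I would apply Lemma~\ref{Lemma:Step} with $k=3t$ (legitimate since $3t\ge 3\ge 2$): it gives $1 - g'_{3(t+1)}(S'_{3(t+1)}) \le \lambda\bigl(1 - g'_{3t}(S'_{3t})\bigr) \le \lambda\cdot\frac{1}{2}\lambda^{t} = \frac{1}{2}\lambda^{t+1}$, which closes the induction. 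Passing to the maximum over strategies then yields $g'_{3t} \ge g'_{3t}(S'_{3t}) \ge 1 - \frac{1}{2}\lambda^{t}$ for every multiple of three.

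Next I would extend this from multiples of three to an arbitrary round count using monotonicity of $g'$. Chaining the two inequalities of Lemma~\ref{Lemma:Symmetrization} gives $g'_j \le g_{j+1} \le g'_{j+1}$ for all $j\ge 2$, so the sequence $m \mapsto g'_m$ is non-decreasing. Setting $t := \lfloor \frac{m-1}{3}\rfloor$ we have $3t \le m-1$, so for $m\ge 4$ (where $t\ge 1$) monotonicity yields $g'_{m-1} \ge g'_{3t} \ge 1 - \frac{1}{2}\lambda^{t}$. Finally I would pass back from $\mathcal{P}'$ to $\mathcal{P}$ via the second inequality of Lemma~\ref{Lemma:Symmetrization}, which reads $g'_{m-1} \le g_m$ for $m\ge 3$; combining gives $g_m \ge g'_{m-1} \ge 1 - \frac{1}{2}\lambda^{\lfloor (m-1)/3\rfloor}$, as claimed. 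The remaining case $m=3$ (where $t=0$) is trivial, since the right-hand side is then $\frac{1}{2}$ and Alice achieves cheating probability $\frac{1}{2}$ just by guessing $d$.

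I do not expect a genuine obstacle here, because the heavy lifting lives entirely in the already-proven base case and in Lemma~\ref{Lemma:Step}; the closest thing to a difficulty is the index bookkeeping around the floor function, where I must check that $3t \le m-1$, that the monotonicity chain $g'_{3t}\le\cdots\le g'_{m-1}$ stays inside the valid range $j\ge 2$, and that the degenerate case $m=3$ is peeled off. All of these are routine verifications rather than real hurdles.
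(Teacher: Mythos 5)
Your proposal is correct and takes essentially the same approach as the paper: iterate Lemma~\ref{Lemma:Step} from the explicit $\mathcal{P}'_3$ base case to obtain $g'_{3t}\ge 1-\frac{1}{2}\left(\left(1-\frac{1}{Q}\right)\left(1-\omega(\CHSHQ)\right)\right)^{t}$, then combine Lemma~\ref{Lemma:Symmetrization} with a monotonicity argument to cover arbitrary $m\ge 3$. The only difference is the order of bookkeeping --- the paper transfers $g'_{3k}\le g_{3k+1}$ first and then invokes monotonicity of $g_m$, whereas you invoke monotonicity of $g'_m$ (which you justify more explicitly than the paper does, by chaining the two inequalities of Lemma~\ref{Lemma:Symmetrization}) and only then transfer via $g'_{m-1}\le g_m$ --- and this reordering is immaterial.
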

	\begin{proof}
	By iterating the above lemma, we obtain $$1 - g'_{3k}(S'_{3k}) \le  \left(\left( 1-\dfrac{1}{Q} \right)\left( 1-\omega(\CHSHQ) \right)\right)^{k-1}(1 - g'_3(S'_3))$$
	Combining this with the initialization step $g'_3(S_3) \geq 1 - \frac{1}{2}(1-\frac{1}{Q})(1-\Cq)$ gives
	$$g'_{3k} \ge g'_{3k}(S_{3k}) \ge 1 - \dfrac{1}{2}\left(\left( 1-\dfrac{1}{Q} \right)\left( 1-\omega(\CHSHQ) \right)\right)^{k}.$$
Using the symmetrization lemma (Lemma \ref{Lemma:Symmetrization}), we immediately get 
$$g_{3k+1} \ge g'_{3k} \geq 1 - \dfrac{1}{2}\left(\left( 1-\dfrac{1}{Q} \right)\left( 1-\omega(\CHSHQ) \right)\right)^{k}.$$
If $m$ can be written $m = 3k + 1$ for some $k$, we have
$$g_{m} \geq 1 - \dfrac{1}{2}\left(\left( 1-\dfrac{1}{Q} \right)\left( 1-\omega(\CHSHQ) \right)\right)^{\frac{m-1}{3}}$$
Since $g_m$ is an increasing function, we have for all $m \ge 3$:
$$g_{m} \geq 1 - \dfrac{1}{2}\left(\left( 1-\dfrac{1}{Q} \right)\left( 1-\omega(\CHSHQ) \right)\right)^{\lfloor\frac{m-1}{3}\rfloor}$$
	\end{proof}
	
\section{Generalization}\label{Section:AttackExtension} 
	
In the previous part, we assumed that $\mathcal{A}_{1}$ and $\mathcal{A}_{2}$ can communicate efficiently, very efficiently, meaning that the propagation time $\rho$ is $2$ rounds. With such a propagation time, relativistic constraints ensure that at a given round $k$, Alice cannot use any information concerning the round $k-1$. However, we supposed that she knows everything about the rounds $k-2$ and before. Note that she obviously has access to the information of round $k-2$, because it occurs at the same place than round $k$. 

What happens if $\mathcal{A}_{1}$ and $\mathcal{A}_{2}$ cannot reliably share their knowledge so fast? In this case, the propagation time $\rho$ will be larger, and at any round $k$, Alice knows everything about rounds $1, 2, ..., k-\rho$ with . We use an even propagation time without loss of generality since computations rotate between two places, and Alice always knows what happened at rounds $k-2$, $k-4$, etc. In this situation, we will show that $\mathcal{A}_{1}$ and $\mathcal{A}_{2}$ cannot just play the $\CHSH_Q$ game. They will have to play the $\CHSH_Q^\gamma$ game, for some $\gamma$ that will be specified later.

Another restriction that we do on the cheating players is that $\mathcal{A}_{1}$ and $\mathcal{A}_{2}$ may need some time to determine the bit $d$ they want to decommit to. We call $k_{0}$ the round starting from which both $\mathcal{A}_{1}$ and $\mathcal{A}_{2}$ know if they try to reveal $d=0$ or $d=1$.

In this more practical setting, we propose the following recursive variant of our attack, for $k > k_0$, for any propagation time $\rho \ge 2$. \\ \\
\encad{\begin{center}
		{\textbf{Recursive Description of a cheating strategy $S_{k+\rho+1}$ given $S_k$}} \end{center}
	\begin{itemize}
		\item Rounds $1$ to $k$: Alice performs the strategy $S_k$ to get outputs $\tilde{y}_1,\dots,\tilde{y}_k$.
		\item Rounds $k+1$ to $k+\rho - 1$: Alice  outputs $\tilde{y}_{k+1},\dots,\tilde{y}_{k+\rho - 1} = 0$.
		\item Rounds $k+\rho$ and $k+\rho+1 $: From round $k+\rho$, $\mathcal{A}_1$ and $\mathcal{A}_2$ both know $d,x_1,\dots,x_k$. Let 
		$$\eta := d \prod\limits_{j=1}^{k}x_{j} - \sum\limits_{i=1}^{k} ( \tilde{y_{i}}\prod\limits_{j=i+1}^{k}x_{j} ).$$ $\mathcal{A}_{1}$ also knows $X = \prod_{j \textrm{ odd } : \ k +1 \leq j \leq k + \rho} \ x_j$ and $\mathcal{A}_2$ knows $Y = \prod_{j \textrm{ even } : \ k +1 \leq j \leq k + \rho} \ x_j$. 
$\mathcal{A}_{(k+\rho) \mod 2}$ and $\mathcal{A}_{(k+\rho+1) \mod 2}$ perform the optimal strategy for $\CHSH_Q^\gamma$ with $\gamma := 1 - (1-\frac{1}{Q})^{\frac{\rho}{2}}$ on respective inputs $X$ and $Y$ to get respective outputs $a$ and $b$. Their outputs of the protocol are respectively $\tilde{y}_{k+\rho} = \eta \cdot a$ and $\tilde{y}_{k+\rho+1} = \eta \cdot b \cdot x_{k+\rho+1}$.
	\end{itemize}
} \\ \\

\begin{lemma} \label{Lemma:StepExtension}
	$\forall k \geq k_{0}$, we have 
	$$g'_{k+\rho+1} \geq \left( 1-\dfrac{1}{Q} \right)\left( 1- \omega(\CHSHQ) \right)g'_{k} + 1 - \left( 1-\dfrac{1}{Q} \right)\left( 1-\omega(\CHSHQ) \right)$$
\end{lemma}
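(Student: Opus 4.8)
The plan is to mirror the analysis of Lemma \ref{Lemma:Step}, but to track carefully the distribution of the effective $\CHSHQ$ inputs $X$ and $Y$, which is the one genuinely new feature here. First I would write out the victory condition $\mathcal{C}'_{k+\rho+1}$ for the symmetrized protocol,
$$\sum_{i=1}^{k+\rho+1}\left(\tilde{y}_i\prod_{j=i+1}^{k+\rho+1}x_j\right) = d\prod_{j=1}^{k+\rho+1}x_j,$$
and split the sum into three blocks: the first $k$ terms, which share the common factor $\prod_{j=k+1}^{k+\rho+1}x_j$; the middle terms $i=k+1,\dots,k+\rho-1$, which vanish since the strategy outputs $\tilde{y}_i=0$ there; and the last two terms $\tilde{y}_{k+\rho+1}+x_{k+\rho+1}\cdot\tilde{y}_{k+\rho}$. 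Factoring $\prod_{j=k+1}^{k+\rho+1}x_j$ out of both sides and introducing $\eta$ exactly as in the strategy description collapses the condition to $\tilde{y}_{k+\rho+1}+x_{k+\rho+1}\cdot\tilde{y}_{k+\rho}=\left(\prod_{j=k+1}^{k+\rho+1}x_j\right)\eta$.

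Next I would substitute the protocol outputs $\tilde{y}_{k+\rho}=\eta\cdot a$ and $\tilde{y}_{k+\rho+1}=\eta\cdot b\cdot x_{k+\rho+1}$ and factor $x_{k+\rho+1}\cdot\eta$ out of everything. Writing $\prod_{j=k+1}^{k+\rho}x_j = X\cdot Y$, the condition becomes
$$x_{k+\rho+1}\cdot\eta\cdot(a+b-X\cdot Y)=0,$$
so $\mathcal{C}'_{k+\rho+1}$ holds iff one of the three events $(x_{k+\rho+1}=0)$, $(\eta=0)$, or $(a+b=X\cdot Y)$ occurs. As in Lemma \ref{Lemma:Step}, these depend on disjoint sets of variables ($x_{k+\rho+1}$; the block $d,x_1,\dots,x_k$ through $\eta$; and $x_{k+1},\dots,x_{k+\rho}$ together with the game's private randomness), hence are independent, with probabilities $\frac{1}{Q}$, $g'_k$ (since $\eta=0$ is exactly $\mathcal{C}'_k$ for the optimal sub-strategy), and $\Pr[a+b=X\cdot Y]$ respectively.

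The crux is identifying $\Pr[a+b=X\cdot Y]$, and this is where the choice of $\gamma$ enters. Here $X=\prod_{j\text{ odd}}x_j$ and $Y=\prod_{j\text{ even}}x_j$ (products over $k+1\le j\le k+\rho$) are each a product of $\rho/2$ independent uniform elements of $\F_Q$, the window having even length $\rho$. I would argue that such a product equals $0$ exactly when one factor vanishes, with probability $1-(1-\frac{1}{Q})^{\rho/2}=\gamma$, and conditioned on being nonzero is uniform on $\F_Q^*$, since a product of independent uniform elements of the group $\F_Q^*$ is uniform. Thus $(X,Y)$ follows precisely the input distribution of $\CHSH_Q^\gamma$ for the chosen $\gamma=1-(1-\frac{1}{Q})^{\rho/2}$, and since $\mathcal{A}_1,\mathcal{A}_2$ run the optimal $\CHSH_Q^\gamma$ strategy on $(X,Y)$, we get $\Pr[a+b=X\cdot Y]=\omega(\CHSH_Q^\gamma)$. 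Matching these distributions and seeing the biased game appear is the main point to get right.

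Combining the three independent events yields
$$g'_{k+\rho+1}\ge 1-\left(1-\tfrac{1}{Q}\right)(1-g'_k)\left(1-\omega(\CHSH_Q^\gamma)\right),$$
after which I would invoke Lemma \ref{Lemma:CHSH_Q^delta}, namely $\omega(\CHSH_Q^\gamma)\ge\omega(\CHSHQ)$ for $\gamma\in[0,1]$, to replace the biased value; since $1-\omega(\CHSH_Q^\gamma)\le 1-\omega(\CHSHQ)$, the inequality only relaxes in the favourable direction, giving $g'_{k+\rho+1}\ge 1-(1-\frac{1}{Q})(1-\omega(\CHSHQ))(1-g'_k)$, which rearranges into the stated bound. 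Finally I would note that the hypothesis $k\ge k_0$ is exactly what guarantees both Alices know $d$ by round $k+\rho$, so that $\eta$ is computable and the strategy is well defined; it plays no role in the probability computation itself.
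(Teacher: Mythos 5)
Your proof is correct and follows essentially the same route as the paper's: the same splitting off of the last $\rho+1$ terms, reduction via $\eta$ to the disjunction $(x_{k+\rho+1}=0)\vee(\eta=0)\vee(a+b=X\cdot Y)$, identification of the law of $(X,Y)$ with the $\CHSH_Q^\gamma$ input distribution, and the final appeal to Lemma \ref{Lemma:CHSH_Q^delta}. If anything, your write-up is slightly more careful than the paper's (you justify uniformity of $X$ on $\F_Q^*$ conditioned on $X\neq 0$, and you keep the factor $x_{k+\rho+1}$ explicit where the paper's intermediate display omits it), but the argument is the same.
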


\begin{proof}
	This demonstration will be similar to Lemma \ref{Lemma:Step}. We consider the cheating strategy described above. Alice's winning condition $\mathcal{C}'_{k+\rho+1}$ is:
	$$\sum\limits_{i=1}^{k+\rho+1} \left( \tilde{y_{i}}\prod\limits_{j=i+1}^{k+\rho+1}x_{j} \right) = d\prod\limits_{j=1}^{k+\rho+1}x_{j}$$
	or, by separating the last $\rho + 1$ terms :
	$$\begin{array}{ll}
	&\tilde{y}_{k+\rho+1}\\
	+&x_{k+\rho+1}\cdot\tilde{y}_{k+\rho}\\
	+&0\\
	\vdots\\
	+&0\\
	+&\left(\prod\limits_{j=k+1}^{k+\rho+1}x_{j}\right) \sum\limits_{i=1}^{k} \left( \tilde{y_{i}}\prod\limits_{j=i+1}^{k}x_{j} \right)\\
	=&\left(\prod\limits_{j=k+1}^{k+\rho+1}x_{j}\right) d \prod\limits_{j=1}^{k}x_{j}
	\end{array}$$
	Recall that $\eta := d \prod\limits_{j=1}^{k}x_{j} - \sum\limits_{i=1}^{k} ( \tilde{y_{i}}\prod\limits_{j=i+1}^{k}x_{j} )$, which allows to simplify $\mathcal{C}'_{k+\rho+1}$ as follows
	
	 \begin{align*}
	 \mathcal{C}'_{k+\rho+1} & \Leftrightarrow \tilde{y}_{k+\rho+1} + x_{k+\rho+1}\cdot\tilde{y}_{k+\rho} = \left(\prod\limits_{j=k+1}^{k+\rho+1}x_{j}\right) \cdot \eta \\
	 & \Leftrightarrow \tilde{y}_{k+\rho+1} + x_{k+\rho+1}\cdot\tilde{y}_{k+\rho} = X \cdot Y \cdot \eta.
	 \end{align*}

	In her cheating strategy, Alice answers $\tilde{y}_{k+\rho} = a \cdot \eta$ and $\tilde{y}_{k+\rho+1} = x_{k+\rho+1} \cdot b \cdot \eta$, where $a$ and $b$ are the Alices' answers when playing the $\CHSH_Q^\gamma$ game, on inputs $X$ and $Y$. Thus 
	
	$$\mathcal{C}'_{k+\rho+1} \Leftrightarrow  (x_{k+\rho+1} = 0) \vee (\eta = 0) \vee (a+b = XY)$$
	
	These three events are independent. The first one occurs with probability $\frac{1}{Q}$, the second one with probability $g'_{k}$. For the third one, notice that $X$ is a product of $\frac{\rho}{2}$ uniformly random number in $\F_Q$. Therefore, we have $\Pr[X = 0] = 1 - (1-\frac{1}{Q})^{\frac{\rho}{2}} = \gamma$ and for any $z \in \F_Q^*$, $\Pr[X = z] = \frac{1 - \gamma}{Q-1}$. $Y$ satisfies the same probability distribution. Therefore, $\Pr[a+b = XY]$ is exactly the probability of winning the $\CHSH_Q^\gamma$ game using its optimal strategy.
	
	This gives :
	$$g'_{k+\rho+1} \geq 1-\left( 1-\dfrac{1}{Q} \right)(1-g'_{k})\left( 1- \omega(\CHSH_Q^\gamma) \right)$$
	Then using Lemma \ref{Lemma:CHSH_Q^delta} :
	$$g'_{k+\rho+1} \geq 1-\left( 1-\dfrac{1}{Q} \right)(1-g'_{k})\left( 1- \omega(\CHSH_Q) \right)$$
	i.e.
	$$g'_{k+\rho+1} \ge \left( 1-\dfrac{1}{Q} \right)\left( 1- \omega(\CHSH_Q) \right)g'_{k} + 1 - \left( 1-\dfrac{1}{Q} \right)\left( 1-\omega(\CHSH_Q) \right)$$
\end{proof}
\begin{theorem}
	For any $k_0 \ge 1$ and $\rho \ge 2$, for any $m \ge k_0 + \rho + 1$, we have 
	$$ g_m \ge 1 - \dfrac{1}{2}\left(( 1-\dfrac{1}{Q})\left( 1-\omega(\CHSHQ) \right)\right)^{\frac{m - k_0 - 1}{\rho + 1}} $$
	\end{theorem}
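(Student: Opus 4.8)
The plan is to mirror the proof of Theorem~\ref{Theorem:Main}, replacing the three-round recursion of Lemma~\ref{Lemma:Step} by the $(\rho+1)$-round recursion of Lemma~\ref{Lemma:StepExtension}. Writing $r := \left(1-\tfrac1Q\right)\left(1-\omega(\CHSHQ)\right) \in [0,1]$, that lemma is exactly the statement $1 - g'_{k+\rho+1} \le r\,(1-g'_k)$ for every $k \ge k_0$. Thus the remaining work is an iteration of this single inequality from a trivial base case, a conversion from the symmetrized quantity $g'$ back to $g$ via Lemma~\ref{Lemma:Symmetrization}, and a monotonicity step to cover arbitrary $m$. I emphasize that all the genuine content — the use of the biased game $\CHSH_Q^\gamma$ with $\gamma = 1-(1-\tfrac1Q)^{\rho/2}$ and its reduction to $\omega(\CHSHQ)$ through Lemma~\ref{Lemma:CHSH_Q^delta} — is already packaged inside Lemma~\ref{Lemma:StepExtension}, so what follows is essentially bookkeeping.

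First I would record the trivial base bound $g'_{k_0} \ge \tfrac12$. It is realized by the strategy in which Alice commits honestly to the fixed value $d=0$ and later reveals $0$: this succeeds with probability $1$ on the $d=0$ term of the average defining $g'$ and contributes a nonnegative amount on the $d=1$ term, giving at least $\tfrac12$ overall. The key point is that this strategy never inspects $d$, so it is legal under the decision-time constraint (which forbids using $d$ before round $k_0$) and defines a valid strategy for $\mathcal{P}'_{k_0}$.

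Next I would iterate Lemma~\ref{Lemma:StepExtension} along the progression $k = k_0,\, k_0+(\rho+1),\, k_0+2(\rho+1),\dots$, every term of which is $\ge k_0$, so the lemma applies at each step and every $\CHSH_Q^\gamma$-block starts after the decision time. A one-line induction on $j$ then yields
\[
1 - g'_{k_0 + j(\rho+1)} \;\le\; r^{\,j}\,\bigl(1-g'_{k_0}\bigr) \;\le\; \tfrac12\,r^{\,j},
\]
i.e.\ $g'_{k_0 + j(\rho+1)} \ge 1 - \tfrac12 r^{\,j}$ for all $j\ge 0$. Applying the symmetrization inequality $g'_n \le g_{n+1}$ of Lemma~\ref{Lemma:Symmetrization} gives
\[
g_{k_0 + j(\rho+1)+1} \;\ge\; g'_{k_0 + j(\rho+1)} \;\ge\; 1 - \tfrac12 r^{\,j}.
\]
When $m = k_0 + 1 + j(\rho+1)$ the exponent is precisely the integer $j = \tfrac{m-k_0-1}{\rho+1}$, which is the claimed inequality.

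For an arbitrary $m \ge k_0+\rho+1$ I would take $j := \lfloor \tfrac{m-k_0-1}{\rho+1}\rfloor$ and invoke monotonicity of $g_m$ — itself the consequence $g_m \le g_{m+1}$ of Lemma~\ref{Lemma:Symmetrization} — to get $g_m \ge g_{k_0+j(\rho+1)+1} \ge 1 - \tfrac12 r^{\,j}$, which is the stated bound with the exponent rounded down. The only real subtlety, and hence the place to be careful rather than the place where a hard idea is needed, is this last bookkeeping: the clean exponent $\tfrac{m-k_0-1}{\rho+1}$ is attained exactly on the residue class $m \equiv k_0+1 \pmod{\rho+1}$, and for other $m$ the honest bound carries the floor (matching the statement of Theorem~\ref{Theorem:PropagationTime}); one must also check that the base case genuinely respects the decision time and that all block indices remain $\ge k_0$, both of which hold by construction.
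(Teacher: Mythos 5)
Your proposal is correct and follows essentially the same route as the paper: iterate Lemma~\ref{Lemma:StepExtension} from the trivial base $g'_{k_0}\ge\tfrac12$, convert back to $g$ via Lemma~\ref{Lemma:Symmetrization}, and handle general $m$ by monotonicity, yielding the bound with the floored exponent. Your added justifications (the explicit $d=0$ strategy for the base case, monotonicity as a consequence of $g_m \le g'_m \le g_{m+1}$, and the remark that the clean exponent holds only on the residue class $m \equiv k_0+1 \pmod{\rho+1}$) are details the paper leaves implicit, but the argument is the same.
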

	
	\begin{proof}
		We use the recursive inequality from Lemma \ref{Lemma:StepExtension}, and the trivial initialization $g'_{k_{0}} \geq \frac{1}{2}$. This gives us $\forall k \geq k_0$, we have
		$$g'_{k_{0}+k(\rho + 1)} \geq 1 - \dfrac{1}{2}\left(( 1-\dfrac{1}{Q})\left( 1-\omega(\CHSHQ) \right)\right)^{k},$$
		and by using Lemma \ref{Lemma:Symmetrization}
		$$
		g_{k_{0}+k(\rho + 1) + 1} \geq 1 - \dfrac{1}{2}\left(( 1-\dfrac{1}{Q})\left( 1-\omega(\CHSHQ) \right)\right)^{k}.$$
		If $m$ can be written $m = k_{0}+k(\rho + 1) + 1$, we have $k = \frac{m - k_0 - 1}{\rho + 1}$ and 
		$$ g_m \ge 1 - \dfrac{1}{2}\left(( 1-\dfrac{1}{Q})\left( 1-\omega(\CHSHQ) \right)\right)^\frac{m - k_0 - 1}{\rho + 1}.$$
		In order, to conclude, notice that $g_m$ is an increasing function in $m$. We can therefore conclude that 
		$$ g_m \ge 1 - \dfrac{1}{2}\left(( 1-\dfrac{1}{Q})\left( 1-\omega(\CHSHQ) \right)\right)^{\lfloor\frac{m - k_0 - 1}{\rho + 1}\rfloor}. $$
	\end{proof}
\bibliography{carbc9}
\bibliographystyle{alpha}

\end{document}